\newcommand\numberthis{\addtocounter{equation}{1}\tag{\theequation}}
\newcommand{\E}{\mathbb{E}}
\newtheorem{Thm}{Theorem}
\newtheorem{Lem}[Thm]{Lemma}
\newtheorem{Cor}{Corollary}
\newtheorem{Def}{Definition}
\newtheorem{Exm}{Example}
\newtheorem{Rem}{Remark}
\newtheorem{Ctn}{Construction}
\title{On the hardness of the Learning with Errors problem with a discrete reproducible error distribution
\thanks{The research was supported by the DFG Research Training Group GRK $1817/1$}
}
\author{
Filipp Valovich\\ Horst G\"{o}rtz Institute for IT Security\\ Faculty of Mathematics\\ Ruhr-Universit\"{a}t Bochum, Universit\"{a}tsstra{\ss}e 150, 44801 Bochum, Germany\\ Email: filipp.valovich@rub.de
}
\date{}
\begin{document}

\maketitle

\pagenumbering{gobble}
\pagestyle{empty}

\begin{abstract}
In this work we show that the hardness of the Learning with Errors problem with errors taken from the discrete Gaussian distribution implies the hardness of the Learning with Errors problem with errors taken from the symmetric Skellam distribution. Due to the sample preserving search-to-decision reduction by Micciancio and Mol the same result applies to the decisional version of the problem. Thus, we provide a variant of the Learning with Errors problem that is hard based on conjecturally hard lattice problems and uses a discrete error distribution that is similar to the continuous Gaussian distribution in that it is closed under convolution. As an application of this result we construct a post-quantum cryptographic protocol for differentially private data anlysis in the distributed model. The security of this protocol is based on the hardness of the new variant of the Decisional Learning with Errors problem. A feature of this protocol is the use of the same noise for security and for differential privacy resulting in an efficiency boost.
\end{abstract}

\section{Introduction}

In recent years the \textit{Learning with Errors} (LWE) problem received a lot of attention in the cryptographic research community. As instance of the LWE problem we are given a uniformly distributed matrix $\textbf{\upshape A}\in\mathbb{Z}_q^{\lambda\times\kappa}$ and a noisy codeword $\textbf{\upshape y}=\textbf{\upshape Ax}+\textbf{\upshape e}\in\mathbb{Z}_q^{\lambda}$ with an error term $\textbf{\upshape e}\in\mathbb{Z}_q^{\lambda}$ chosen from a proper error distribution $\chi^{\lambda}$ and an unknown $\textbf{\upshape x}\in\mathbb{Z}_q^{\kappa}$. The task is to find the correct vector $\textbf{\upshape x}$. In the decisional version of this problem (DLWE problem) we are given $(\textbf{\upshape A}, \textbf{\upshape y})$ and have to decide whether $\textbf{\upshape y}=\textbf{\upshape Ax}+\textbf{\upshape e}$ or $\textbf{\upshape y}$ is a uniformly distributed vector in $\mathbb{Z}_q^{\lambda}$. In \cite{42} a search-to-decision reduction was provided to show that the two problems are essentially equivalent in the worst case and in \cite{43} a sample preserving search-to-decision reduction was provided for certain cases showing the equivalence in the average case. Moreover, in \cite{42} the average-case-hardness of the search problem was established by the construction of an efficient quantum algorithm for worst-case-lattice-problems using an efficient solver of the LWE problem if the error distribution $\chi$ is a \textit{discrete Gaussian} distribution. Accordingly, most cryptographic applications of the LWE problem used a discrete Gaussian error distribution for their constructions.\\
In this work we are concerned with the question whether the hardness of the LWE problem can be established for other discrete distributions, especially for reproducible distributions, i.e. distributions that are closed under convolution (- the discrete Gaussian distribution is not reproducible). This question is motivated by the following application. In \cite{2} the notion of \textit{Private Stream Aggregation} (PSA) was introduced. A PSA scheme is a cryptographic protocol between $n$ users and an untrusted aggregator. It enables each user to securely send encrypted time-series data to the aggregator. The aggregator is then able to decrypt the aggregate of all data in each time step, but cannot retrieve any further information about the individual data. In \cite{40} it was shown that a PSA scheme can be built upon any key-homomorphic weak pseudo-random function and some security guarantees were provided. In this paper we instantiate a concrete PSA scheme with a key-homomorphic weak pseudo-random function constructed from the DLWE problem.\\
A PSA scheme enables the users to output statistics over their data under \textit{differential privacy}. This notion was introduced in \cite{8} and is a measure for statistical disclosure of private data. Usually, $(\epsilon,\delta)$-differential privacy is preserved by using a mechanism that adds properly distributed noise to the statistics computed over a database with individual private data of users. In most of the works on differentialy privacy, these mechanisms are considered in the \textit{centralised} setting, where a trusted authority collects the individual data in the clear and performs the perturbation process. In contrast, a PSA scheme allows the users to perform differentially private data analysis in the \textit{distributed setting}, i.e. without the need of relying on a trusted authority. In light of that, the \textit{Skellam mechanism} was introduced in \cite{40} and shown to preserve differential privacy. The advantage of the Skellam mechanism over other mechanisms (like the Laplace \cite{8}, the Exponential \cite{9}, the Geometric \cite{52}, the Gaussian \cite{14} or the Binomial \cite{14}) mechanisms is that it is discrete (enabling cryptographic operations), maintains relatively high accuracy and is reproducible. This property allows all users to generate noise of small variance, that sums up to the value for the required level of differential privacy. Therefore the Skellam mechanism is well-suited for an execution through a PSA scheme.\\
We will take advantage of these properties of the Skellam distribution for our DLWE-based PSA scheme by using errors following the symmetric Skellam distribution $\text{\upshape Sk}_\mu$ with variance $\mu$ rather than the discrete Gaussian distribution. Therefore we need to show the average-case-hardness of the LWE problem with errors taken from the Skellam distribution. Now we can state the main theorem that will be shown in this work.

\begin{Thm}\label{lweskthm}
Let $\kappa$ be a security parameter 
and let $\lambda=\lambda(\kappa)=\text{\upshape\sffamily poly}(\kappa)$ with $\lambda>3\kappa$. Let $q=q(\kappa)=\text{\upshape\sffamily poly}(\kappa)$ be a sufficiently large prime modulus and $\rho>0$ such that $\rho q\geq 2\lambda\sqrt{\kappa}$. If there exists a PPT-algorithm that solves the $\text{\upshape LWE}(\kappa,\lambda,q,{\text{\upshape Sk}_{(\rho q)^2/4}})$ problem with non-negligible probability, then there exists an efficient quantum-algorithm that approximates the decision-version of the shortest vector problem ($\text{\upshape GAPSVP}$) and the shortest independent vectors problem ($\text{\upshape SIVP}$) to within $\tilde{O}(\lambda\kappa/\rho)$ in the worst case.
\end{Thm}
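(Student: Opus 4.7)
The plan is to chain two reductions. Regev's quantum reduction shows that an efficient solver of $\text{\upshape LWE}(\kappa,\lambda,q,D_{\mathbb{Z},\alpha q})$ with discrete Gaussian error of parameter $\alpha q \geq 2\sqrt{\kappa}$ yields an efficient quantum algorithm approximating $\text{\upshape GAPSVP}$ and $\text{\upshape SIVP}$ to within $\tilde{O}(\kappa/\alpha)$. It therefore suffices to exhibit a classical reduction from $\text{\upshape LWE}$ with discrete Gaussian error of parameter $\alpha q$ to $\text{\upshape LWE}$ with symmetric Skellam error of variance $(\rho q)^2/4$, for some $\alpha$ of order $\rho/\lambda$; feeding this into Regev's theorem will produce the claimed approximation factor $\tilde{O}(\lambda\kappa/\rho)$.

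The reduction between error distributions proceeds by \emph{noise flooding}, exploiting the reproducibility of the symmetric Skellam distribution. Given a Gaussian LWE instance $(\mathbf{A},\mathbf{y}=\mathbf{A}\mathbf{x}+\mathbf{e})$ with $\mathbf{e}\sim D_{\mathbb{Z},\alpha q}^{\lambda}$, I would sample $\mathbf{e}'\sim\text{\upshape Sk}_{\mu'}^{\lambda}$ independently with $\mu'=(\rho q)^2/4-(\alpha q)^2$ chosen so that the total second moment matches the target Skellam variance, and pass $(\mathbf{A},\mathbf{y}+\mathbf{e}')$ to the assumed Skellam solver. The core technical claim is that the resulting per-coordinate error distribution $D_{\mathbb{Z},\alpha q}\ast\text{\upshape Sk}_{\mu'}$ is statistically close to $\text{\upshape Sk}_{(\rho q)^2/4}$. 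Intuitively, for $\sigma=\alpha q\ll\sqrt{\mu'}$ the flooded noise differs from the target Skellam in total variation by at most $\mathbb{E}_{e}[\text{TV}(\text{\upshape Sk}_{\mu'},\text{\upshape Sk}_{\mu'}(\cdot-e))]=O(\sigma/\sqrt{\mu'})$ per sample, essentially by the continuous-Gaussian approximation of the Skellam. Union-bounding over $\lambda$ samples, we need $\lambda\sigma/\sqrt{\mu'}$ to be negligible, i.e.\ $\alpha q\leq\rho q/(2\lambda\cdot\kappa^{\omega(1)})$; this permits $\alpha=\Theta(\rho/\lambda)$ up to subpolynomial factors, yielding Regev's approximation factor $\tilde{O}(\kappa/\alpha)=\tilde{O}(\lambda\kappa/\rho)$.

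The main obstacle will be a rigorous proof of the per-sample shift estimate $\text{TV}(\text{\upshape Sk}_\mu,\text{\upshape Sk}_\mu(\cdot-j))=O(|j|/\sqrt{\mu})$, valid uniformly over the range of shifts that a discrete Gaussian with parameter $\alpha q$ typically produces. The cleanest route is likely via the Fourier transform: using $\varphi_{\text{\upshape Sk}_\mu}(t)=\exp(\mu(\cos t-1))$ together with the Plancherel identity, one can reduce the shift-invariance estimate to a frequency integral that is concentrated near $t=0$ where $\varphi_{\text{\upshape Sk}_\mu}(t)\approx\exp(-\mu t^2/2)$. Alternatively, one may exploit the representation $\text{\upshape Sk}_\mu=X-Y$ with $X,Y\sim\text{\upshape Poisson}(\mu/2)$ i.i.d.\ and construct an explicit coupling that achieves the desired distance. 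Finally, one must verify $\alpha q=\Theta(\rho q/\lambda)\geq 2\sqrt{\kappa}$, which follows from the hypothesis $\rho q\geq 2\lambda\sqrt{\kappa}$, so that Regev's reduction genuinely applies in this parameter regime.
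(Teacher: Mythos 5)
Your proposal takes a genuinely different route from the paper --- noise flooding instead of lossy codes --- but it contains a gap that is not merely a missing calculation; it is the very obstruction that the paper's lossy-code machinery is designed to circumvent.

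Trace the parameters through your own argument. The per-sample statistical distance from flooding is $O(\alpha q/\sqrt{\mu'})\approx O(\alpha/\rho)$, so over $\lambda$ samples the total distance is $O(\lambda\alpha/\rho)$. To preserve a non-negligible advantage of the assumed Skellam solver through the reduction, this quantity must be \emph{negligible}, which forces $\alpha q \leq \rho q/(\lambda\cdot\kappa^{\omega(1)})$ as you correctly write. But $\kappa^{\omega(1)}$ is a \emph{superpolynomial} factor, not a ``subpolynomial'' one, and it is not absorbed by $\tilde{O}(\cdot)$. Hence $\kappa/\alpha = \lambda\kappa^{1+\omega(1)}/\rho$, which is superpolynomially worse than the claimed $\tilde{O}(\lambda\kappa/\rho)$. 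Worse, Regev's theorem requires $\alpha q \geq 2\sqrt{\kappa}$, yet under your constraint $\alpha q \leq \rho q/(\lambda\kappa^{\omega(1)}) \leq 2\sqrt{\kappa}/\kappa^{\omega(1)} < 2\sqrt{\kappa}$ given the hypothesis $\rho q \geq 2\lambda\sqrt{\kappa}$; so the base reduction does not even apply in the stated parameter regime. The two requirements --- small enough $\alpha$ for flooding, large enough $\alpha q$ for Regev --- are incompatible at the stated $\rho q$, and reconciling them would require assuming $\rho q$ superpolynomially larger than $2\lambda\sqrt{\kappa}$, changing the theorem statement.

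The paper sidesteps exactly this tension by never demanding statistical closeness between a Gaussian and a Skellam error. Instead it builds a lossy code $\textbf{A}=(\textbf{A}'\,\|\,\textbf{A}'\textbf{T}+\textbf{G})$ with $\textbf{G}\leftarrow D_\nu^{\lambda\times\kappa/2}$: its indistinguishability from uniform is \emph{computational}, inherited directly from the hardness of Gaussian LWE (Lemma \ref{lossypr}, via Lemma \ref{lwematver}), and the lossiness for $\text{Sk}_\mu$ (Lemma \ref{lossysklem}) is a direct probabilistic estimate --- a Bayes-rule manipulation followed by bounds on ratios of modified Bessel functions (Lemma \ref{modbesrat}) and tail bounds (Lemmas \ref{disgausbound}, \ref{skellbound}) --- that goes through with only the \emph{polynomial} gap $\mu\geq\lambda^2\nu$. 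This is precisely what keeps the approximation factor at $\tilde{O}(\lambda\kappa/\rho)$. To salvage a flooding-style argument you would need a fundamentally better shift estimate than $O(|j|/\sqrt{\mu})$ summed over $\lambda$ independent coordinates, and no such improvement is available for worst-case statistical distance between product measures.
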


Based on the same assumptions, the decisional version $\text{\upshape DLWE}(\kappa,\lambda,q,{\text{\upshape Sk}_{(\rho q)^2/4}})$ of the problem is also hard due to the search-to-decision reduction in \cite{43}. Hence, the error terms in our DLWE-based PSA scheme are used for two tasks: establishing the cryptographic security of the scheme and the distributed generation of noise for preserving differential privacy.\\
Our proof of Theorem \ref{lweskthm} is inspired by techniques used in \cite{39} where a variant of the LWE problem with uniform errors on a small support is shown to be hard. As in \cite{39}, we will construct a \textit{lossy code} for the error distribution $\text{\upshape Sk}_\mu$ from the LWE problem with discrete Gaussian errors. Variants of lossy codes were first used in \cite{53} and since then had applications in different hardness reductions, such as the reduction from the LWE problem to the Learning with Rounding problem in \cite{54}. Lossy codes are pseudo-random codes seeming to be good codes. However, encoding messages with a lossy code and adding certain errors annihilates any information on the message. On the other hand, encoding the same message using a truly random code and adding the same type of error preserves the message. We will conclude that recovering the message when encoding with a random code and adding noise must be computationally hard. If this was not the case, lossy codes could be efficiently distinguished from random codes, contradicting the pseudo-randomness-property of lossy codes. As in \cite{39} and opposed to the LWE problem with discrete Gaussian errors, our worst-to-average case reduction depends on the number $\lambda$ of LWE-samples. Thus, we will consider a $\lambda$-bounded LWE problem, where $\lambda$ has a fixed $\text{\upshape\sffamily poly}(\kappa)$ upper bound. This does not restrict our application to PSA and differential privacy, since we will identify $\lambda$ with the total number of queries processed during the execution of PSA.\\
Essential definitions and known facts about used distributions, the Learning with Errors problem and lossy codes are given in Sections \ref{nad} and \ref{fac} respectively. The proof of Theorem \ref{lweskthm} is provided in Section \ref{hardn} and in Section \ref{applic} we show how this result can be applied to the construction of a PSA scheme, thus yielding a prospective post-quantum cryptographic protocol for data analyses under differential privacy.

\begin{Rem} The result from $\cite{39}$ does not provide a proper error distribution for our DLWE-based PSA scheme since a differentially private mechanism with uniform noise provides no accuracy to statistical data analyses.
\end{Rem}

\begin{Rem} The original result from $\cite{42}$ states that the LWE problem is hard in the set $\mathbb{T}=\mathbb{R}/\mathbb{Z}$ when the noise is distributed according to the \mbox{\em continuous} Gaussian distribution (with a certain bound on the variance) modulo $1$. Although the continuous Gaussian distribution is reproducible as well, it does not seem to be a good idea to instantiate it instead of the Skellam distribution in the DLWE-based PSA scheme: For data processing reasons the values would have to be discretised. The resulting noise would follow a discrete Gaussian distribution which is not reproducible any more. The larger the number of users the more independent samples of discrete Gaussian noise would have to be added together. Moreover, if the total noise has to be invariant to the number of users (e.g. in order to keep the accuracy of the statistics), then the discretisation causes a stronger deviation from the discrete Guassian distribution the larger the number of users becomes. Therefore the deviation from a truely $(\epsilon,\delta)$-differentially private analysis would scale with the number of users.
\end{Rem}

\section{Preliminaries}\label{nad}

Let $q>2$ be a prime. We handle elements from $\mathbb{Z}_q$ as their central residue-class representation. This means that $x^\prime\in\mathbb{Z}_q$ is identified with $x\equiv x^\prime$ mod $q$ for $x\in\{-(q-1)/2,\ldots,(q-1)/2\}$ thereby lifting $x^\prime$ from $\mathbb{Z}_q$ to $\mathbb{Z}$.

\subsection{Distributions}

Let $X$ be a set. We denote by $\mathcal{U}(X)$ the uniform distribution on $X$. Let $\chi$ be a distribution (on $X$). We denote by $x\leftarrow\chi$ (or sometimes $x\leftarrow\chi(X)$) the sampling of $x$ (from $X$) according to $\chi$. If $A\leftarrow\chi^{a\times b}$ (or $A\leftarrow\chi(X^{a\times b})$) then $A$ is an $a\times b$-matrix constructed by picking every entry independently (from $X$) according to the distribution $\chi$.

\begin{Def}[Discrete Gaussian Distribution\cite{42}]
Let $q$ be an integer and let $\Phi_s$ denote the normal distribution with variance $s^2/(2\pi)$. Let $\overline{\Psi}_\alpha$ denote the discretised Gaussian distribution with variance $(\alpha q)^2/(2\pi)$, i.e. $\overline{\Psi}_\alpha$ is sampled by taking a sample from $\Phi_{\alpha q}$ and performing a randomised rounding $\cite{44}$. Let $D_\nu$ be the discretised Gaussian distribution with variance $\nu$, i.e. $D_\nu=\overline{\Psi}_{\sqrt{2\pi\nu}/q}$.
\end{Def}

\begin{Def}[Skellam Distribution \cite{29}]\label{skellam} Let $\mu_1$, $\mu_2> 0$. A discrete random variable $X$ is drawn according to the Skellam distribution with parameters $\mu_1,\mu_2$ (short: $X\leftarrow \text{\upshape Sk}_{\mu_1,\mu_2}$) if it has the following probability distribution function $\psi_{\mu_1,\mu_2}\colon\mathbb{Z}\mapsto\mathbb{R}$:
\[\psi_{\mu_1,\mu_2}(k)=\exp(-(\mu_1+\mu_2))\left(\frac{\mu_1}{\mu_2}\right)^{k/2}I_k(2\sqrt{\mu_1\mu_2}),\]
where $I_k$ is the modified Bessel function of the first kind (see pages $374$--$378$ in $\cite{28}$).
\end{Def}

A random variable $X\leftarrow \text{\upshape Sk}_{\mu_1,\mu_2}$ has variance $\mu_1+\mu_2$ and can be generated as the difference of two random variables drawn according to the Poisson distribution of mean $\mu_1$ and $\mu_2$, respectively \cite{29}. Note that the Skellam distribution is not generally symmetric. However, we only consider the particular case $\mu_1=\mu_2=\mu/2$ and refer to this symmetric distribution as $\text{Sk}_\mu = \text{Sk}_{\mu/2,\mu/2}$.

\subsection{Learning with Errors}\label{lwesubsec}

We will consider a $\lambda$-bounded LWE problem, where the adversary is given $\lambda(\kappa) = \text{\upshape\sffamily poly}(\kappa)$ samples (which we can write conveniently in matrix-form). As observed in \cite{39}, this consideration poses no restrictions to most cryptographic applications of the LWE problem, since they require only an a-priori fixed number of samples. In our application to differential privacy (Section \ref{applic}) we will identify $\lambda$ with the number of queries in a pre-defined time-series.\\

\noindent\textit{Problem} $1$. $\boldsymbol\lambda$\textbf{-bounded LWE Search-Problem, Average-Case Version.} Let $\kappa$ be a security parameter, let $\lambda = \lambda(\kappa) = \text{\upshape\sffamily poly}(\kappa)$ and $q = q(\kappa)\geq 2$ be integers and let $\chi$ be a distribution on $\mathbb{Z}_q$. Let $\textbf{x}\leftarrow\mathcal{U}(\mathbb{Z}_q^\kappa)$, let $\textbf{A}\leftarrow\mathcal{U}(\mathbb{Z}_q^{\lambda\times\kappa})$ and let $\textbf{e}\leftarrow\chi^\lambda$. The goal of the $\text{\upshape LWE}(\kappa,\lambda,q,\chi)$ problem is, given $(\textbf{A}, \textbf{Ax} + \textbf{e})$, to find $\textbf{x}$.\\

\noindent\textit{Problem} $2$. $\boldsymbol\lambda$\textbf{-bounded LWE Distinguishing-Problem.} Let $\kappa$ be a security parameter, let $\lambda = \lambda(\kappa) = \text{\upshape\sffamily poly}(\kappa)$ and $q = q(\kappa)\geq 2$ be integers and let $\chi$ be a distribution on $\mathbb{Z}_q$. Let $\textbf{x}\leftarrow\mathcal{U}(\mathbb{Z}_q^\kappa)$, let $\textbf{A}\leftarrow\mathcal{U}(\mathbb{Z}_q^{\lambda\times\kappa})$ and let $\textbf{e}\leftarrow\chi^\lambda$. The goal of the $\text{\upshape DLWE}(\kappa,\lambda,q,\chi)$ problem is, given $(\textbf{A}, \textbf{y})$, to decide whether $\textbf{y}=\textbf{Ax} + \textbf{e}$ or $\textbf{y}=\textbf{u}$ with $\textbf{u}\leftarrow\mathcal{U}(\mathbb{Z}_q^\lambda)$.\\

\subsection{Entropy and Lossy Codes}

We introduce the map-conditional entropy as starting point for our technical tools. It can be seen as a measure for ambiguity.
\begin{Def}[Map-conditional entropy]
Let $\chi$ be a probability distribution with finite support and let $X\leftarrow\chi$. Let $Supp(\chi)$ be the support of $\chi$, $\xi\in Supp(\chi)$ and let $f,g$ be two (possibly randomised) maps on the domain $Supp(\chi)$. The $(f,g,\xi)$-conditional entropy $H_{f,g,\xi} (X)$ of $X$ is defined as 
\[H_{f,g,\xi}(X)=-\log_2(\Pr[X=\xi\,|\,f(X)=g(\xi)]).\]
\end{Def}

In the remainder of the paper we will see $f=f_{\textbf{A},\textbf{e}}$ and $g=g_{\textbf{A},\textbf{e}}$ as maps to the set of LWE instances, i.e.
\[f_{\textbf{A},\textbf{e}}(\textbf{y})=g_{\textbf{A},\textbf{e}}(\textbf{y})=\textbf{Ay} + \textbf{e}.\]
In this work we consider the $(f_{\textbf{A},\textbf{e}},f_{\textbf{A},\tilde{\textbf{e}}},\tilde{\textbf{x}})$-conditional entropy
\[H_{f_{\textbf{A},\textbf{e}},f_{\textbf{A},\tilde{\textbf{e}}},\tilde{\textbf{x}}}(\textbf{x})=-\log_2(\Pr[\textbf{x}=\tilde{\textbf{x}}\,|\,\textbf{Ax} + \textbf{e}=\textbf{A}\tilde{\textbf{x}} + \tilde{\textbf{e}}]),\]
of a random variable $\textbf{x}$, i.e. the entropy of $\textbf{\upshape x}$ given that a LWE instance generated with $(\textbf{\upshape A},\textbf{\upshape x},\textbf{e})$ is equal to another LWE instance generated with $(\textbf{\upshape A},\tilde{\textbf{\upshape x}},\tilde{\textbf{e}})$. Now we provide the notion of lossy codes, which is a main technical tool used in the proof of the hardness result.

\begin{Def}[Families of Lossy Codes \cite{39}]\label{lossydef}
Let $\kappa$ be a security parameter, let $\lambda = \lambda(\kappa) = \text{\upshape\sffamily poly}(\kappa)$ and let $q=q(\kappa)\geq 2$ be a modulus, $\Delta=\Delta(\kappa)$ and let $\chi$ be a distribution on $\mathbb{Z}_q$. Let $\{\mathcal{C}_{\kappa,\lambda,q}\}$ be a family of distributions where $\mathcal{C}_{\kappa,\lambda,q}$ is defined on $\mathbb{Z}_q^{\lambda\times\kappa}$. The distribution family $\{\mathcal{C}_{\kappa,\lambda,q}\}$ is $\Delta$-lossy for the error distribution $\chi$, if the following hold:
\begin{enumerate}
 \item $\mathcal{C}_{\kappa,\lambda,q}$ is pseudo-random: It holds that $\mathcal{C}_{\kappa,\lambda,q}\approx_c\mathcal{U}(\mathbb{Z}_q^{\lambda\times\kappa})$.
 \item $\mathcal{C}_{\kappa,\lambda,q}$ is lossy: Let $f_{\textbf{\upshape B},\textbf{\upshape b}}(\textbf{\upshape y})=\textbf{\upshape B}\textbf{\upshape y} + \textbf{\upshape b}$. Let $\textbf{\upshape A}\leftarrow\mathcal{C}_{\kappa,\lambda,q}$, $\tilde{\textbf{\upshape x}}\leftarrow\mathcal{U}(\mathbb{Z}_q^\kappa), \tilde{\textbf{\upshape e}}\leftarrow\chi^\lambda$, let $\textbf{\upshape x}\leftarrow\mathcal{U}(\mathbb{Z}_q^\kappa)$ and $\textbf{\upshape e}\leftarrow\chi^\lambda$. Then it holds that 
 \[\Pr_{(\textbf{\upshape A},\tilde{\textbf{\upshape x}},\tilde{\textbf{\upshape e}})}[H_{f_{\textbf{\upshape A},\textbf{\upshape e}},f_{\textbf{\upshape A},\tilde{\textbf{\upshape e}}},\tilde{\textbf{\upshape x}}}(\textbf{\upshape x})\geq\Delta]\geq 1-\text{\upshape\sffamily neg}(\kappa).\]
 \item $\mathcal{U}(\mathbb{Z}_q^{\lambda\times\kappa})$ is non-lossy: Let $f_{\textbf{\upshape B},\textbf{\upshape b}}(\textbf{\upshape y})=\textbf{\upshape B}\textbf{\upshape y} + \textbf{\upshape b}$. Let $\textbf{\upshape A}\leftarrow\mathcal{U}(\mathbb{Z}_q^{\lambda\times\kappa})$, $\tilde{\textbf{\upshape x}}\leftarrow\mathcal{U}(\mathbb{Z}_q^\kappa), \tilde{\textbf{\upshape e}}\leftarrow\chi^\lambda$, let $\textbf{\upshape x}\leftarrow\mathcal{U}(\mathbb{Z}_q^\kappa)$ and $\textbf{\upshape e}\leftarrow\chi^\lambda$. Then it holds that 
\[\Pr_{(\textbf{\upshape A},\tilde{\textbf{\upshape x}},\tilde{\textbf{\upshape e}})}[H_{f_{\textbf{\upshape A},\textbf{\upshape e}},f_{\textbf{\upshape A},\tilde{\textbf{\upshape e}}},\tilde{\textbf{\upshape x}}}(\textbf{\upshape x})=0]\geq 1-\text{\upshape\sffamily neg}(\kappa).\]
\end{enumerate}
\end{Def}

Note that our definition of lossy codes deviates from the definition of lossy codes provided in \cite{39}, since we use another type of entropy (which in general may be larger than the conditional min-entropy considered in \cite{39}). We will see that our notion of map-conditional entropy suffices for showing the hardness of the LWE problem.

\section{Basic Facts}\label{fac}

\subsection{Facts about the used Distributions}

We need to find a value such that a random variable distributed according to the discrete Gaussian distribution exceeds this value only with negligible probability.

\begin{Lem}[Bound on the discrete Gaussian distribution]\label{disgausbound}
Let $\kappa$ be a security parameter, let $s=s(\kappa)=\omega(\log(\kappa))$ and let $\nu=\nu(\kappa)=\text{\upshape\sffamily poly}(\kappa)$. Let $\zeta=\zeta(\kappa)>0$ be an integer. Let $g_1,\ldots,g_\zeta\leftarrow D_\nu$. Then \[\Pr\left[\left|g_1+\ldots +g_\zeta\right|>\sqrt{\zeta\nu s}\right]\leq\text{\upshape\sffamily neg}(\kappa).\]
\end{Lem}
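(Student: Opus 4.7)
The plan is to exploit the construction $D_\nu=\overline{\Psi}_{\sqrt{2\pi\nu}/q}$ and decompose each $g_i$ into a continuous Gaussian part plus a bounded rounding part, then combine the reproducibility of the continuous Gaussian with a Hoeffding-type bound on the rounding noise. Explicitly, for each $i$ I write $g_i=x_i+r_i$, where $x_i\leftarrow\Phi_{\sqrt{2\pi\nu}}$ has variance $\nu$ and $r_i$ is the randomised rounding error from \cite{44}, which almost surely satisfies $|r_i|\leq 1$ and $\E[r_i\mid x_i]=0$; the pairs $(x_i,r_i)$ are mutually independent across $i$. Setting $S_X=\sum_{i=1}^{\zeta} x_i$ and $S_R=\sum_{i=1}^{\zeta} r_i$, the triangle inequality together with a union bound yields
\[\Pr\!\bigl[|S_X+S_R|>\sqrt{\zeta\nu s}\bigr]\leq\Pr\!\bigl[|S_X|>\sqrt{\zeta\nu s}/2\bigr]+\Pr\!\bigl[|S_R|>\sqrt{\zeta\nu s}/2\bigr].\]

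For the Gaussian summand I use reproducibility: $S_X$ is Gaussian with variance $\zeta\nu$, so the standard Gaussian tail bound gives $\Pr[|S_X|>\sqrt{\zeta\nu s}/2]\leq 2\exp(-s/8)$. For the rounding summand, the $r_i$ are independent, centred, and bounded by $1$, so Hoeffding's inequality gives $\Pr[|S_R|>\sqrt{\zeta\nu s}/2]\leq 2\exp(-\nu s/8)$. Since $\nu=\text{\upshape\sffamily poly}(\kappa)$ is at least a positive constant asymptotically and $s=\omega(\log\kappa)$, both right-hand sides are of the form $2\exp(-\omega(\log\kappa))$ and are therefore negligible in $\kappa$; summing two negligible quantities preserves negligibility.

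The main obstacle is ensuring that the rounding noise is clean enough for Hoeffding to apply — specifically, that the $r_i$ are mutually independent, centred given the $x_i$, and almost surely bounded. All three properties are immediate from the specification of randomised rounding in \cite{44}, where each $x_i$ is rounded independently to $\lfloor x_i\rfloor$ or $\lceil x_i\rceil$ with the complementary probabilities $1-(x_i-\lfloor x_i\rfloor)$ and $x_i-\lfloor x_i\rfloor$ using fresh randomness. Everything else is a routine application of the Gaussian tail bound and Hoeffding's inequality, with the small numerical constants absorbed into the asymptotic threshold $\sqrt{\zeta\nu s}$.
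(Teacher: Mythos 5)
Your proof is correct, but it takes a genuinely different route from the paper. The paper's proof is a one-liner: it asserts that each $g_i$ is sub-gaussian with parameter $\sqrt{\nu}$ and cites the Hoeffding-type concentration inequality for sums of sub-gaussians (Proposition 5.10 in Vershynin~\cite{51}), which then immediately gives the claimed tail bound. You instead exploit the explicit structure of $D_\nu = \overline{\Psi}_{\sqrt{2\pi\nu}/q}$ as ``continuous Gaussian plus randomised rounding'': you write $g_i = x_i + r_i$, use exact reproducibility of the continuous Gaussian for $\sum x_i$, apply Hoeffding's inequality for bounded variables to $\sum r_i$, and stitch the two together with a union bound on half-thresholds. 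Both arguments work, but they put the effort in different places. Your decomposition is arguably more careful about what is actually being claimed: the paper's assertion that $D_\nu$ has sub-gaussian parameter exactly $\sqrt{\nu}$ glosses over the additive rounding noise, which in the strict sub-gaussian calculus bumps the parameter to $\sqrt{\nu+O(1)}$; that correction is swallowed harmlessly only when $\nu=\Omega(1)$, an implicit assumption both proofs share and that you at least make explicit. The price of your approach is that it uses a structural property of $\overline{\Psi}_\alpha$ (specifically, that the rounding residual is conditionally centered and bounded by $1$, with fresh randomness per coordinate), whereas the paper's appeal to sub-gaussianity would transfer unchanged to other error distributions with the same sub-gaussian parameter. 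Two small points worth noting: your Hoeffding constant for the rounding term indeed comes out to $2\exp(-\nu s/8)$ under the range $[-1,1]$, and your observation that negligibility of this term needs $\nu s=\omega(\log\kappa)$, hence $\nu$ bounded away from $0$, is exactly the right caveat; and like the paper you silently treat the $g_i$ as integers rather than residues mod $q$, which is fine since $\sqrt{\nu}\ll q$ throughout the paper's parameter regime.
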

\begin{proof}
Since $g_1,\ldots,g_\zeta$ are independent sub-gaussian random variables with sub-gaussian parameter $\sqrt{\nu}$, the result follows from an application of the Hoeffding-type inequality for sub-gaussian random variables (see Proposition $5.10$ in \cite{51}).
\end{proof}

We use the fact that the sum of independent Skellam random variables is a Skellam random variable, i.e. the Skellam distribution is reproducible.

\begin{Lem}[Reproducibility of $\text{\upshape Sk}_{\mu_1,\mu_2}$ \cite{29}]\label{sksum} Let $X\leftarrow \text{\upshape Sk}_{\mu_1,\mu_2}$ and $Y\leftarrow \text{\upshape Sk}_{\mu_3,\mu_4}$ be independent random variables. Then $Z:=X+Y$ is distributed according to $\text{\upshape Sk}_{\mu_1+\mu_3,\mu_2+\mu_4}$.
\end{Lem}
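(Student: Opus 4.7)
The plan is to exploit the Poisson representation of the Skellam distribution noted just after Definition \ref{skellam}, together with the well-known reproducibility of the Poisson distribution, to reduce the statement to a fact about sums of independent Poisson random variables.

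First, I would invoke the characterization of a Skellam variable as a difference of two independent Poisson variables. Namely, write $X = P_1 - P_2$ and $Y = Q_1 - Q_2$, where $P_1 \leftarrow \text{Poisson}(\mu_1)$, $P_2 \leftarrow \text{Poisson}(\mu_2)$, $Q_1 \leftarrow \text{Poisson}(\mu_3)$, $Q_2 \leftarrow \text{Poisson}(\mu_4)$, with all four variables mutually independent (joint independence across the two pairs follows from the assumed independence of $X$ and $Y$, since one can couple a realization of $(X,Y)$ with independent underlying Poisson variables). Then
\[ Z = X + Y = (P_1 + Q_1) - (P_2 + Q_2). \]
Since $P_1, Q_1$ are independent Poissons, the standard reproducibility of the Poisson distribution gives $P_1 + Q_1 \leftarrow \text{Poisson}(\mu_1 + \mu_3)$, and likewise $P_2 + Q_2 \leftarrow \text{Poisson}(\mu_2 + \mu_4)$; moreover these two sums remain independent because each depends on a disjoint pair of the original four independent variables. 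Applying Definition \ref{skellam} in reverse identifies $Z$ as $\text{Sk}_{\mu_1+\mu_3,\,\mu_2+\mu_4}$, which is the claim.

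If one prefers to avoid the Poisson coupling step (i.e.\ the realization of $X$ and $Y$ as differences of specific Poisson variables), a clean alternative is to work directly with moment generating functions. The Skellam MGF is $M_{\text{Sk}_{\mu_1,\mu_2}}(t) = \exp(-(\mu_1+\mu_2) + \mu_1 e^{t} + \mu_2 e^{-t})$, and multiplying the MGFs of $X$ and $Y$ (which is valid by independence) yields exactly the Skellam MGF with parameters $(\mu_1+\mu_3, \mu_2+\mu_4)$; uniqueness of the MGF then finishes the proof. A third option is a brute-force convolution using the Bessel-function series $\psi_{\mu_1,\mu_2}$ from Definition \ref{skellam}, but this requires a nontrivial Bessel-function identity and is much less transparent.

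The only delicate point is the justification that one may represent the independent pair $(X,Y)$ by four jointly independent Poisson variables rather than merely two independent pairs. This is handled by choosing the underlying Poisson variables on a product probability space; equivalently, the MGF argument sidesteps the issue entirely. I would therefore present the Poisson-difference approach as the main proof and mention the MGF route as a one-line alternative.
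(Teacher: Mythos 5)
Your proof is correct. The paper itself gives no proof of this lemma: it is stated as an imported fact with a citation to the Skellam reference, so there is no ``paper approach'' to compare against. Your main argument via the Poisson representation $X = P_1 - P_2$, $Y = Q_1 - Q_2$ with four jointly independent Poisson variables is sound, and it rests precisely on the fact the paper records immediately after Definition~\ref{skellam} (a $\text{Sk}_{\mu_1,\mu_2}$ variable is the difference of independent $\text{Poisson}(\mu_1)$ and $\text{Poisson}(\mu_2)$ variables) together with the standard reproducibility of the Poisson family. You are right to flag the coupling subtlety --- that independence of $X$ and $Y$ lets you place the four Poisson variables on a product space --- and your handling of it is adequate. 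The moment-generating-function alternative you sketch, $M_{\text{Sk}_{\mu_1,\mu_2}}(t)=\exp\bigl(-(\mu_1+\mu_2)+\mu_1 e^{t}+\mu_2 e^{-t}\bigr)$ multiplied across independent factors, is equally valid, slightly cleaner because it avoids the coupling step, and is consistent with the symmetric-case MGF $\exp(-\mu(1-\cosh t))$ that the paper uses in the proof of Lemma~\ref{skellbound}. Either route would serve as a self-contained proof; the Bessel-series convolution you mention as a third option is indeed the least transparent and best avoided.
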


An induction step shows that the sum of $n$ i.i.d. symmetric Skellam random variables with variance $\mu$ is a symmetric Skellam random variable with variance $n\mu$. For our analysis, we will use the following bound on the ratio of modified Bessel functions of the first kind. 

\begin{Lem}[Bound on $I_k(\mu)$ \cite{27}]\label{modbesrat} For real $k$, let $I_k(\mu)$ be the modified Bessel function of the first kind and order $k$. Then
 \[1>\frac{I_k(\mu)}{I_{k-1}(\mu)}>\frac{-k+\sqrt{k^2+\mu^2}}{\mu},\, k\geq 0.\]
\end{Lem}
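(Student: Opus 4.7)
The plan is to reduce the desired bound on $r_k(\mu):=I_k(\mu)/I_{k-1}(\mu)$ to a quadratic inequality by combining two classical facts about modified Bessel functions of the first kind: the three-term recurrence
\[I_{k-1}(\mu) \;=\; \frac{2k}{\mu}\,I_k(\mu) + I_{k+1}(\mu),\]
and the Tur\'an-type inequality $I_k(\mu)^2 > I_{k-1}(\mu)\,I_{k+1}(\mu)$ for $\mu>0$ and $k\geq 1$, which expresses the strict log-concavity of $k\mapsto I_k(\mu)$.

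Dividing the recurrence by $I_{k-1}(\mu)$ produces the functional identity
\[r_{k+1}(\mu) \;=\; \frac{1}{r_k(\mu)} - \frac{2k}{\mu},\]
while the Tur\'an inequality rewrites as $r_{k+1}(\mu) < r_k(\mu)$. Substituting the first relation into the second gives $r_k(\mu) > 1/r_k(\mu) - 2k/\mu$, and multiplying by $r_k(\mu) > 0$ yields
\[r_k(\mu)^2 + \frac{2k}{\mu}\,r_k(\mu) - 1 \;>\; 0.\]
Since $r_k(\mu)>0$, this forces $r_k(\mu)$ to exceed the positive root of the associated quadratic, which is exactly $(-k+\sqrt{k^2+\mu^2})/\mu$, and the lower bound is established. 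For the upper bound $r_k(\mu)<1$ I would invoke the well-known strict monotonicity $I_0(\mu)>I_1(\mu)>I_2(\mu)>\cdots$ for fixed $\mu>0$, which follows from the integral representation $I_k(\mu)=\frac{1}{\pi}\int_0^\pi e^{\mu\cos\theta}\cos(k\theta)\,d\theta$ (for integer $k$) combined with the identity $\cos((k-1)\theta)-\cos(k\theta) = 2\sin(\theta/2)\sin((2k-1)\theta/2)$ and a symmetrisation on $(0,\pi)$.

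The main obstacle I anticipate is a clean proof of the Tur\'an-type inequality $I_k(\mu)^2 > I_{k-1}(\mu)\,I_{k+1}(\mu)$; this is classical but not entirely trivial. The route I would take applies the Cauchy--Schwarz inequality to a suitable Poisson-type integral representation of $I_k(\mu)$, with strictness obtained from the linear independence of the integrands (or, alternatively, through the determinantal identity $I_{k-1}I_{k+1}-I_k^2 = -\sum_{j\geq 0} c_j (\mu/2)^{2j+2k}$ with $c_j>0$, read off from the series). A secondary subtlety is the endpoint $k=0$: there the proposed lower bound equals $1$ and collides with the strict upper bound $r_k<1$, so the argument above is substantive only for $k\geq 1$; I would read the printed ``$k\geq 0$'' as shorthand for the positive orders on which the bound actually carries content, which is the range needed in the subsequent analysis of the Skellam distribution.
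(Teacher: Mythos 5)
The paper does not prove this lemma; it cites the bound from \cite{27} and uses it as a black box, so there is no internal proof to compare against. Your derivation of the lower bound is correct and is essentially the standard route for ratio bounds of this type: divide the three-term recurrence $I_{k-1}(\mu)=(2k/\mu)I_k(\mu)+I_{k+1}(\mu)$ by $I_{k-1}(\mu)$ to get $r_{k+1}=1/r_k-2k/\mu$, invoke the Tur\'an-type log-concavity $I_k^2>I_{k-1}I_{k+1}$ (classical for $k\ge 1$, due to Thiruvenkatachar--Nanjundiah) to get $r_{k+1}<r_k$, and solve the resulting quadratic $r_k^2+(2k/\mu)r_k-1>0$ for $r_k$. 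You are also right that the printed range ``$k\ge 0$'' overstates the claim: at $k=0$ the proposed lower bound is exactly $1$, while $I_0/I_{-1}=I_0/I_1>1$, so both inequalities fail there. Fortunately the lemma is only applied in the proof of Lemma~\ref{lossysklem} with integer indices $k\ge 1$ (the inner product runs from $k=1+\tilde e_j$ with $\tilde e_j\ge 0$ after the symmetry reduction), so nothing downstream is affected, but the hypothesis really should read $k\ge 1$. Two smaller points. The ``symmetrisation on $(0,\pi)$'' you gesture at for the upper bound is more delicate than it reads, since $e^{\mu\cos\theta}\sin(\theta/2)$ is not monotone on $(0,\pi)$; a cleaner route is to note that once the lower bound is established, the identity $r_{k+1}=1/r_k-2k/\mu$ together with $1/r_k<(k+\sqrt{k^2+\mu^2})/\mu$ gives $r_{k+1}<(-k+\sqrt{k^2+\mu^2})/\mu<1$ for every $k\ge 1$, and the base case $r_1<1$ follows at once from $I_0-I_1=\pi^{-1}\int_0^\pi e^{\mu\cos\theta}(1-\cos\theta)\,d\theta>0$. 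Finally, only the lower half of the lemma actually enters the chain of inequalities in Lemma~\ref{lossysklem}, so that is the half that carries the weight.
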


Moreover, we need a proper bound on the symmetric Skellam distribution that holds with probability exponentially close to $1$.

\begin{Lem}[Bound on the Skellam distribution]\label{skellbound}
Let $\kappa$ be a security parameter, let $s=s(\kappa)=\omega(\log(\kappa))$ and let $\mu=\mu(\kappa)=\text{\upshape\sffamily poly}(\kappa)$ with $\mu>s>0$. Let $X\leftarrow\text{\upshape Sk}_\mu$. Then \[\Pr[X>s\sqrt{\mu}]\leq\text{\upshape\sffamily neg}(\kappa).\]
\end{Lem}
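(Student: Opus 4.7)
The plan is to apply a Chernoff--Markov argument using the moment generating function of the symmetric Skellam distribution. Writing $X = Y_1 - Y_2$ with $Y_1,Y_2$ i.i.d.\ Poisson-distributed of mean $\mu/2$ (as observed right after Definition~\ref{skellam}), the MGF of $X$ is $M_X(t) = \exp(\mu(\cosh(t)-1))$, so Markov's inequality applied to $e^{tX}$ gives, for every $t > 0$,
\[
\Pr[X > s\sqrt{\mu}] \leq \exp\bigl(\mu(\cosh(t)-1) - ts\sqrt{\mu}\bigr).
\]
Optimizing $t$ directly would give $t = \operatorname{arcsinh}(s/\sqrt{\mu})$ and a slightly awkward closed form, so I would instead combine the Poisson decomposition with a union bound to sidestep the explicit optimization.

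Concretely, if $X > s\sqrt{\mu}$ then at least one of $\{Y_1 > \mu/2 + s\sqrt{\mu}/2\}$ or $\{Y_2 < \mu/2 - s\sqrt{\mu}/2\}$ must hold, since otherwise $Y_1 - Y_2 \leq s\sqrt{\mu}$. I would then invoke the standard Bernstein--Chernoff tail bound for Poisson random variables, which with $\lambda = \mu/2$ reads
\[
\Pr[Y_i \geq \mu/2 + r] \;\leq\; \exp\!\left(-\frac{r^2}{\mu + 2r/3}\right),
\]
together with the symmetric lower-tail estimate for $Y_2$. Taking $r = s\sqrt{\mu}/2$ and summing the two tails yields
\[
\Pr[X > s\sqrt{\mu}] \;\leq\; 2\exp\!\left(-\Omega\!\left(\frac{s^2\mu}{\mu + s\sqrt{\mu}}\right)\right).
\]

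Two regimes then have to be analyzed. In the (sub-)Gaussian regime $s \leq \sqrt{\mu}$, the denominator is $\Theta(\mu)$, so the bound becomes $\exp(-\Omega(s^2))$, which is negligible because $s = \omega(\log\kappa)$ gives $s^2 = \omega(\log^2\kappa)$. In the sub-exponential regime $\sqrt{\mu} < s < \mu$ (allowed by the hypotheses), the denominator is $\Theta(s\sqrt{\mu})$, so the bound becomes $\exp(-\Omega(s\sqrt{\mu}))$; here $\mu > s > \sqrt{\mu}$ implies $s\sqrt{\mu} > \mu > s = \omega(\log\kappa)$, and again the bound is negligible.

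The main obstacle is the case split around $s = \sqrt{\mu}$: the Skellam tails behave Gaussian-like within a standard deviation of the mean but transition to Poisson-type (sub-exponential) behaviour for larger deviations, so a single Hoeffding-type sub-gaussian bound (as used in Lemma~\ref{disgausbound}) would fail in the far tail. The Bernstein--Chernoff form quoted above covers both regimes uniformly, so the remaining work is routine bookkeeping.
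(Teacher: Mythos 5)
Your proof is correct, but it takes a genuinely different route from the paper's. The paper also starts from Markov applied to $e^{tX}$ with the MGF $\E[e^{tX}]=e^{-\mu(1-\cosh t)}$, but then fixes the single, $\mu$-independent parameter $t=\operatorname{arsinh}(1/\sqrt{s})$ and asserts that the exponent $\mu(\cosh t-1)-ts\sqrt{\mu}$ is bounded above by the $s$-only quantity $s(\cosh t-1)-ts^{3/2}$ on the strength of $\mu>s$, after which it evaluates the latter and gets $e^{-s}e^{2/3}$. You instead exploit the Poisson decomposition $X=Y_1-Y_2$, union-bound the tail event, invoke Bernstein/Bennett tails for $\operatorname{Poisson}(\mu/2)$, and split into the regimes $s\le\sqrt{\mu}$ and $\sqrt{\mu}<s<\mu$. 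That case split is not merely cosmetic. After dividing by $\sqrt{\mu}-\sqrt{s}>0$, the paper's ``replace $\mu$ by $s$'' step is equivalent to $(\sqrt{\mu}+\sqrt{s})(\cosh t-1)<ts$, which for the fixed $t=\operatorname{arsinh}(1/\sqrt{s})$ holds only while $\mu\lesssim 4s^3$ and fails beyond that (for $\mu$ around $10s^3$ the paper's exponent is in fact positive, so the claimed bound becomes vacuous). In the paper's own application, Lemma~\ref{lossysklem} has $\mu\ge\lambda^2\nu$ with $\lambda>s$, $\kappa>s^3$, and $\nu>2\kappa/\pi$, so $\mu\gg s^3$ is exactly the regime in play; the lemma is still true there (the tail only gets lighter as $\mu$ grows), but the paper's written argument does not establish it, whereas your $s\le\sqrt{\mu}$ branch handles it directly via $\exp(-\Omega(s^2))$. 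In short, what your approach buys is robustness across the whole parameter range: Bernstein implicitly re-optimizes the Chernoff parameter for each $\mu$, so you never need a single $t$ to work uniformly, and the resulting proof is both more elementary in presentation (explicit Poisson tails plus a union bound) and, strictly speaking, more complete than the one in the paper.
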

\begin{proof}
Applying the Laplace transform and the Markov's inequality we obtain for any $t>0$,
\[\Pr[X>s\sqrt{\mu}]=\Pr[e^{tX}>e^{ts\sqrt{\mu}}]\leq\frac{\E[e^{tX}]}{e^{ts\sqrt{\mu}}}.\]
As shown in \cite{30}, the moment generating function of $X\sim \text{Sk}(\mu)$ is 
\[\E[e^{tX}]=e^{-\mu(1-\cosh(t))},\]
where $\cosh(t)= (e^t + e^{-t})/2$. Hence, we have
\[\Pr[X>s\sqrt{\mu}]\leq e^{-\mu(1-\cosh(t))-ts\sqrt{\mu}}< e^{-s(1-\cosh(t))-ts^{3/2}}.\]
Fix $t=\operatorname{arsinh}(1/\sqrt{s})$. Then
\begin{align*} \Pr[X>s\sqrt{\mu}]< & e^{-s(1-\sqrt{1+1/s})-s^{3/2}\operatorname{arsinh}(1/\sqrt{s})}\\
= & e^{-s}\cdot e^{s\cdot(\sqrt{1+1/s}-\sqrt{s}\operatorname{arsinh}(1/\sqrt{s}))}\\
< & e^{-s}\cdot e^{2/3}\\
= & \text{\upshape\sffamily neg}(\kappa).
\end{align*}
To see the last inequality, observe that the function $f(s)=s\cdot(\sqrt{1+1/s}-\sqrt{s}\operatorname{arsinh}(1/\sqrt{s}))$ is monotonically increasing and its limit is $2/3$.
\end{proof}

\subsection{Facts about Learning with Errors}

In \cite{42}, Regev established worst-to-average-case connections between conjecturally hard lattice problems and the $\text{\upshape LWE}(\kappa,\lambda,q,D_\nu)$ problem.

\begin{Thm}[Worst-to-Average Case \cite{42}]\label{wtacr} Let $\kappa$ be a security parameter and let $q = q(\kappa)$ be a modulus, let $\alpha=\alpha(\kappa)\in(0,1)$ be such that $\alpha q> 2\sqrt{\kappa}$. If there exists a probabilistic polynomial-time algorithm solving the $\text{\upshape LWE}(\kappa,\lambda,q,D_{(\alpha q)^2/(2\pi)})$ problem with non-negligible probability, 
then there exists an efficient quantum algorithm that approximates the decision-version of the shortest vector problem ($\text{\upshape GAPSVP}$) and the shortest independent vectors problem ($\text{\upshape SIVP}$) to within $\tilde{O}(\kappa/\alpha)$ in the worst case.
\end{Thm}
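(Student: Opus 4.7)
The plan is to follow Regev's original two-phase reduction: first show that $\text{GAPSVP}$ and $\text{SIVP}$ (with approximation factor $\tilde{O}(\kappa/\alpha)$) reduce classically to the Discrete Gaussian Sampling problem $\text{DGS}_\gamma$, which asks to produce samples from $D_{L,r}$ for any $r\geq \gamma(L):=\sqrt{2\kappa}\cdot\eta_\varepsilon(L)/\alpha$ (with $\eta_\varepsilon$ the smoothing parameter); then reduce $\text{DGS}_\gamma$ to average-case $\text{LWE}(\kappa,\lambda,q,D_{(\alpha q)^2/(2\pi)})$ via a quantum iterative procedure. The first step is standard lattice machinery: samples from $D_{L,r}$ at small $r$ yield short vectors, and combining them via transference theorems together with Banaszczyk's tail bounds solves both $\text{GAPSVP}$ and $\text{SIVP}$ at the claimed factor while losing only polylog factors.

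The core of the proof is an iterative ``width halving'' lemma: given polynomially many samples from $D_{L,r\sqrt{2}}$ and oracle access to the LWE solver, produce polynomially many samples from $D_{L,r}$ for any $r\geq \gamma(L)$. I would initialise the iteration at a very wide parameter $r_0$ at which $D_{L,r_0}$ can be sampled classically, e.g.\ via randomised rounding on an $\text{LLL}$-reduced basis, then apply the lemma $O(\log(r_0/\gamma(L)))=\text{poly}(\kappa)$ times to reach the target width. Everything therefore reduces to constructing a single iteration.

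Each iteration factors into a classical and a quantum sub-step. In the classical step, I would use the $\lambda$ samples from $D_{L,r\sqrt{2}}$ together with uniformly random coefficients in $\mathbb{Z}_q$ to synthesise an $\text{LWE}$ instance whose secret encodes a BDD target on the dual lattice $L^{*}$ with Gaussian error of parameter proportional to $\alpha q/(r\sqrt{2})$; the hypothesis $\alpha q>2\sqrt{\kappa}$ is exactly what places this BDD radius inside the unique-decoding regime, so invoking the LWE oracle yields a solver for BDD on $L^{*}$. In the quantum step, I would prepare a coherent superposition of Gaussian-weighted lattice points around a uniformly random centre, run the dual BDD solver coherently (with uncomputation to erase ancillary registers), apply the quantum Fourier transform over $\mathbb{Z}_q^{\kappa}$, and measure. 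By Poisson summation, the measurement distribution is statistically close to $D_{L,r}$.

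The main obstacle will be the quantum sub-step: proving that the composed circuit --- superposition preparation, coherent BDD call, quantum Fourier transform, measurement --- produces a final state whose measurement distribution is negligibly far in trace distance from $D_{L,r}$. This requires a careful analysis of error propagation in the coherent BDD invocation, and a tight application of Poisson summation in which the dual smoothing parameter at width $r$ must be made negligibly small; that smallness is exactly what the calibration $\alpha q>2\sqrt{\kappa}$ together with Banaszczyk's bound on dual Gaussian tails is designed to guarantee. Once the single iteration is secured, chaining the classical $\text{DGS}$-to-lattice reduction with the iterated width halving and tracking the approximation factor through each stage yields the claimed $\tilde{O}(\kappa/\alpha)$ bound for $\text{GAPSVP}$ and $\text{SIVP}$.
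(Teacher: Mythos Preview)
The paper does not prove this theorem; it is quoted verbatim from Regev's original work \cite{42} and used as a black box in the proof of Theorem~\ref{lweskthm}. There is therefore no in-paper proof to compare your proposal against.

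That said, your outline is a faithful high-level sketch of Regev's actual argument: the reduction of $\text{GAPSVP}$ and $\text{SIVP}$ to $\text{DGS}$, the iterative width-halving that bootstraps from wide Gaussians (samplable via LLL) down to narrow ones, the classical step that turns $D_{L,r}$ samples into a BDD instance on $L^{*}$ and feeds it to the LWE oracle, and the quantum step that uses the BDD solver coherently together with the Fourier transform to produce samples from $D_{L,r/\sqrt{2}}$. Your identification of the role of the condition $\alpha q>2\sqrt{\kappa}$ --- placing the BDD radius in the unique-decoding regime and controlling the dual smoothing parameter --- is also correct. If your intent was to reconstruct the proof of the cited result rather than of something original to this paper, you are on the right track; but within the scope of the present paper, no proof is expected.
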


We use the search-to-decision reduction from \cite{43} basing the hardness of Problem $2$ on the hardness of Problem $1$ which works for any error distribution $\chi$ and is sample preserving.

\begin{Thm}[Search-to-Decision \cite{43}]\label{lwestod} Let $\kappa$ be a security parameter, $q = q(\kappa) = \text{\upshape\sffamily poly}(\kappa)$ a prime modulus and let $\chi$ be any distribution on $\mathbb{Z}_q$. Assume there exists a probabilistic polynomial-time distinguisher that solves the $\text{\upshape DLWE}(\kappa,\lambda,q,\chi)$ problem with non-negligible success-probability, then there exists a probabilistic polynomial-time adversary that solves the  $\text{\upshape LWE}(\kappa,\lambda,q,\chi)$ problem with non-negligible success-probability.
\end{Thm}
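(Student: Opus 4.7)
The plan is to adapt Regev's coordinate-by-coordinate search-to-decision strategy and to make it \emph{sample preserving} in the spirit of Micciancio--Mol: the $\lambda$ samples handed to the search algorithm are never refreshed, and every query to the DLWE distinguisher $D$ is obtained by applying a fresh invertible re-randomisation to the same input. The secret $\textbf{x}\in\mathbb{Z}_q^\kappa$ is recovered one coordinate at a time by using $D$ to test, for each $i\in\{1,\dots,\kappa\}$ and each candidate $g\in\mathbb{Z}_q$, the hypothesis $x_i=g$. Since $q=\text{\upshape\sffamily poly}(\kappa)$, there are only polynomially many such hypotheses in total.

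The core re-randomisation is the following. Given $(\textbf{A},\textbf{y})$ with $\textbf{y}=\textbf{Ax}+\textbf{e}$, pick $\textbf{r}\leftarrow\mathcal{U}(\mathbb{Z}_q^\lambda)$ and form $(\textbf{A}',\textbf{y}')$ where $\textbf{A}'$ is $\textbf{A}$ with $\textbf{r}$ added to its $i$-th column and $\textbf{y}'=\textbf{y}+g\cdot\textbf{r}$. A short computation gives $\textbf{y}'=\textbf{A}'\textbf{x}+\textbf{e}+(g-x_i)\textbf{r}$, while the bijection $(\textbf{A},\textbf{r})\mapsto(\textbf{A}',\textbf{r})$ shows that $\textbf{A}'$ is uniform in $\mathbb{Z}_q^{\lambda\times\kappa}$ and independent of $\textbf{r}$. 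If $g=x_i$ the extra summand vanishes and $(\textbf{A}',\textbf{y}')$ is distributed exactly as a fresh LWE sample with the same secret $\textbf{x}$ and the same noise law $\chi^\lambda$. If $g\neq x_i$, the primality of $q$ makes $(g-x_i)$ invertible, so $(g-x_i)\textbf{r}$ is uniform and independent of $\textbf{A}'$, and $(\textbf{A}',\textbf{y}')$ is uniform on $\mathbb{Z}_q^{\lambda\times\kappa}\times\mathbb{Z}_q^\lambda$. Hence, averaged over the external randomness of the original instance and the internal randomness $\textbf{r}$, $D$ preserves its assumed non-negligible distinguishing advantage $\varepsilon$ on the derived distribution.

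The main obstacle is that $\varepsilon$ is only required to be non-negligible, so a single call to $D$ is unreliable, and the usual remedy of drawing additional independent LWE samples is forbidden by sample preservation. This is precisely the step handled by the Micciancio--Mol argument: because the conditional distribution of the derived pair given $(\textbf{A},\textbf{y})$ depends only on whether $g=x_i$, repeating the transformation with independently drawn $\textbf{r}$'s yields independent calls to $D$ from the \emph{same} $\lambda$ samples. A standard averaging / Markov argument shows that on a non-negligible fraction of original instances $(\textbf{A},\textbf{e})$ the distinguisher keeps a non-negligible advantage on the derived distribution, and for those instances a Chernoff-type concentration after $\text{\upshape\sffamily poly}(\kappa)/\varepsilon^2$ repetitions lets us estimate the two branch acceptance probabilities accurately enough to identify the unique $g$ with $g=x_i$. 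A union bound over the $\kappa\cdot q=\text{\upshape\sffamily poly}(\kappa)$ hypotheses then recovers the full secret $\textbf{x}$ in polynomial time with non-negligible probability, while every distinguisher query is derived from the single initial $\lambda$-tuple of LWE samples, yielding the desired sample-preserving reduction for arbitrary $\chi$.
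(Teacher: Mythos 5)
The paper gives no proof of this statement; Theorem \ref{lwestod} is quoted directly from Micciancio and Mol \cite{43}, whose reduction the paper relies on. What you sketch is the classical Regev-style coordinate-by-coordinate hypothesis test. The re-randomisation map $(\textbf{A},\textbf{y}) \mapsto (\textbf{A}',\textbf{y}')$ is exactly Regev's, but the argument that it stays sound when the $\lambda$ samples are never refreshed does not go through, and it is not the route taken in \cite{43}.

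The gap is in the averaging. The distinguisher $D$ is only promised a non-negligible gap after averaging over the full randomness $(\textbf{A},\textbf{x},\textbf{e})$; in your derived instance only the $i$-th column of $\textbf{A}$ is refreshed by $\textbf{r}$, while the remaining $\kappa-1$ columns, the secret $\textbf{x}$ and the noise $\textbf{e}$ are inherited once and stay fixed across all $\text{\upshape\sffamily poly}(\kappa)/\varepsilon^2$ repetitions. Chernoff therefore concentrates the empirical acceptance rate around the \emph{conditional} acceptance probability given that particular input, which may have nothing to do with the unconditional values $p$ and $u$. Markov does carve out, for each fixed $(i,g)$, a set of inputs of measure $\Omega(\varepsilon)$ on which the conditional gap remains $\Omega(\varepsilon)$, but you must land simultaneously in all $\kappa\cdot q$ such sets, and with measure only $\Omega(\varepsilon)$ each, a union bound over $\kappa q$ events is vacuous. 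Worse, the claim that $(\textbf{A}',\textbf{y}')$ is uniform when $g\neq x_i$ is only true after averaging over $\textbf{A}$ as well: conditional on the given input, $\textbf{r}$ is a deterministic function of $\textbf{A}'$ (the difference of their $i$-th columns), hence $\textbf{y}'=\textbf{y}+g\textbf{r}$ is determined by $\textbf{A}'$, and the pair $(\textbf{A}',\textbf{y}')$ is supported on only $q^\lambda$ of the $q^{\lambda(\kappa+1)}$ possible values. The ``uniform'' branch the distinguisher is supposed to recognise is thus far from uniform once the samples are reused, and the advantage you count on may simply vanish or flip sign there. Micciancio and Mol avoid all of this by abandoning per-coordinate tests entirely: they argue via a Goldreich--Levin hard-core-function theorem that LWE search hardness makes the inner product $\langle\textbf{x},\textbf{b}\rangle$ (for a public random $\textbf{b}$) unpredictable from $(\textbf{A},\textbf{Ax}+\textbf{e},\textbf{b})$, and a hybrid/extraction argument turns that unpredictability into pseudorandomness of $(\textbf{A},\textbf{Ax}+\textbf{e})$. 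The secret is recovered in one shot by list decoding, with no conditioning of a decision oracle on a fixed partial matrix and noise vector and no amplification step that would require fresh samples; a self-contained proof should be rebuilt along those lines.
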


Finally, we provide a matrix version of Problem $2$. The hardness of this version can be shown by using a hybrid argument as pointed out in \cite{39}.

\begin{Lem}[Matrix version of LWE]\label{lwematver}
Let $\kappa$ be a security parameter, $\lambda=\lambda(\kappa)= \text{\upshape\sffamily poly}(\kappa)$, $\kappa^\prime=\kappa^\prime(\kappa)=\text{\upshape\sffamily poly}(\kappa)$. Assume that the $\text{\upshape DLWE}(\kappa,\lambda,q,\chi)$ problem is hard. Then $(\textbf{\upshape A},\textbf{\upshape AX}+\textbf{\upshape E})$ is pseudo-random, where $\textbf{\upshape A}\leftarrow\mathcal{U}(\mathbb{Z}_q^{\lambda\times\kappa}), \textbf{\upshape X}\leftarrow\mathcal{U}(\mathbb{Z}_q^{\kappa\times\kappa^\prime})$ and $\textbf{\upshape E}\leftarrow\chi^{\lambda\times\kappa^\prime}$.
\end{Lem}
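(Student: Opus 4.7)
The plan is a standard hybrid argument over the $\kappa'$ columns of $\mathbf{X}$ and $\mathbf{E}$. For $i \in \{0,1,\ldots,\kappa'\}$, define the hybrid distribution $H_i$ over pairs $(\mathbf{A},\mathbf{Y})$ as follows: sample $\mathbf{A}\leftarrow\mathcal{U}(\mathbb{Z}_q^{\lambda\times\kappa})$; sample the first $i$ columns of $\mathbf{Y}$ independently and uniformly from $\mathcal{U}(\mathbb{Z}_q^\lambda)$; for each remaining column $j\in\{i+1,\ldots,\kappa'\}$, sample $\mathbf{x}_j\leftarrow\mathcal{U}(\mathbb{Z}_q^\kappa)$ and $\mathbf{e}_j\leftarrow\chi^\lambda$ and set the $j$-th column of $\mathbf{Y}$ to $\mathbf{A}\mathbf{x}_j + \mathbf{e}_j$. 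Then $H_0$ is exactly the distribution of $(\mathbf{A},\mathbf{A}\mathbf{X}+\mathbf{E})$, while $H_{\kappa'}$ is $(\mathbf{A},\mathbf{U})$ with $\mathbf{U}\leftarrow\mathcal{U}(\mathbb{Z}_q^{\lambda\times\kappa'})$.

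Suppose for contradiction that a PPT distinguisher $\mathcal{D}$ separates $H_0$ from $H_{\kappa'}$ with non-negligible advantage $\varepsilon=\varepsilon(\kappa)$. By the triangle inequality (a telescoping sum over consecutive hybrids) there exists an index $i^*\in\{0,\ldots,\kappa'-1\}$ such that $\mathcal{D}$ distinguishes $H_{i^*}$ from $H_{i^*+1}$ with advantage at least $\varepsilon/\kappa'$, which is still non-negligible since $\kappa'=\mathrm{poly}(\kappa)$. Locate such an index in expectation by choosing $i^*$ uniformly at random (the resulting average advantage over a random choice is $\varepsilon/\kappa'$, and a standard guessing argument loses only a polynomial factor).

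Now build a DLWE distinguisher $\mathcal{D}'$ for Problem 2 as follows. Given a challenge $(\mathbf{A},\mathbf{z})$ with $\mathbf{A}\leftarrow\mathcal{U}(\mathbb{Z}_q^{\lambda\times\kappa})$ and $\mathbf{z}$ either of the form $\mathbf{A}\mathbf{x}+\mathbf{e}$ or uniform in $\mathbb{Z}_q^\lambda$, the reduction constructs $\mathbf{Y}$ by: placing fresh uniform samples in the first $i^*$ columns; placing $\mathbf{z}$ in column $i^*+1$; and generating the remaining $\kappa'-i^*-1$ columns honestly as $\mathbf{A}\mathbf{x}_j+\mathbf{e}_j$ with freshly sampled $\mathbf{x}_j$ and $\mathbf{e}_j$. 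Then $\mathcal{D}'$ outputs whatever $\mathcal{D}(\mathbf{A},\mathbf{Y})$ outputs. By construction, if $\mathbf{z}=\mathbf{A}\mathbf{x}+\mathbf{e}$ the distribution of $(\mathbf{A},\mathbf{Y})$ equals $H_{i^*}$, and if $\mathbf{z}$ is uniform it equals $H_{i^*+1}$. Therefore $\mathcal{D}'$ inherits a non-negligible advantage against DLWE$(\kappa,\lambda,q,\chi)$, contradicting the hypothesis.

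The only delicate point is ensuring the loss factor in the reduction stays polynomial; this is what forces the restriction $\kappa'=\mathrm{poly}(\kappa)$ in the statement. The syntactic rewriting of a single LWE column into a column of the matrix is immediate since $\mathbf{A}$ is shared across all columns and the $\mathbf{x}_j$, $\mathbf{e}_j$ are independent, so no further care is needed to simulate the remaining hybrid columns.
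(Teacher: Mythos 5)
Your hybrid argument over the $\kappa'$ columns is exactly the standard approach the paper invokes; the paper provides no proof of its own beyond the remark that ``the hardness of this version can be shown by using a hybrid argument as pointed out in \cite{39}.'' Your sketch correctly fills in that hybrid argument (shared $\textbf{\upshape A}$, independent columns, challenge embedded at a random column index, telescoping loss factor $\kappa'=\text{\upshape\sffamily poly}(\kappa)$), so it matches the intended proof.
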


\subsection{Facts about Lossy Codes}

We will use the fact that the existence of a lossy code for an error distribution implies the hardness of the associated decoding problems. This means, solving the LWE problem is hard, even though with overwhelming probability the secret is information-theoretically unique. In \cite{39} it was shown that solving the LWE problem for $\chi$ is hard if there exists a lossy code for $\chi$ in the sense of \cite{39}. Here we prove this statement for our definition of lossy codes. The proof is very similar to the one in \cite{39}.

\begin{Thm}[Lossy code gives hard LWE]\label{lossythm}
Let $\kappa$ be a security parameter, let $\lambda=\lambda(\kappa)=\text{\upshape\sffamily poly}(\kappa)$ and let $q=q(\kappa)$ be a modulus. Let the distribution $\chi$ on $\mathbb{Z}_q$ be efficiently samplable. Let $\Delta=\Delta(\kappa)=\omega(\log(\kappa))$. Then the $\text{\upshape LWE}(\kappa,\lambda,q,\chi)$ problem is hard, given that there exists a family $\{\mathcal{C}_{\kappa,\lambda,q}\}\subseteq\mathbb{Z}_q^{\lambda\times\kappa}$ of $\Delta$-lossy codes for the error distribution $\chi$.
\end{Thm}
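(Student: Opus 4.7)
The plan is a reduction by contradiction in the style of \cite{39}. Suppose a PPT adversary $\mathcal{A}$ solves $\text{\upshape LWE}(\kappa,\lambda,q,\chi)$ with non-negligible probability $\nu(\kappa)$. I build a PPT-distinguisher $\mathcal{D}$ between $\mathcal{C}_{\kappa,\lambda,q}$ and $\mathcal{U}(\mathbb{Z}_q^{\lambda\times\kappa})$ with non-negligible advantage, contradicting property 1 of Definition \ref{lossydef}. On input $\textbf{\upshape A}$, the distinguisher $\mathcal{D}$ samples $\textbf{\upshape x}\leftarrow\mathcal{U}(\mathbb{Z}_q^\kappa)$ and $\textbf{\upshape e}\leftarrow\chi^\lambda$ (feasible since $\chi$ is efficiently samplable), forms $\textbf{\upshape y}=\textbf{\upshape Ax}+\textbf{\upshape e}$, runs $\textbf{\upshape x}'\leftarrow\mathcal{A}(\textbf{\upshape A},\textbf{\upshape y})$, and outputs $1$ iff $\textbf{\upshape x}'=\textbf{\upshape x}$. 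When $\textbf{\upshape A}\leftarrow\mathcal{U}(\mathbb{Z}_q^{\lambda\times\kappa})$ the pair $(\textbf{\upshape A},\textbf{\upshape y})$ is a correctly distributed LWE instance, so $\mathcal{D}$ outputs $1$ with probability at least $\nu(\kappa)$.

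The substantive step is bounding $\Pr[\mathcal{D}(\textbf{\upshape A})=1]$ when $\textbf{\upshape A}\leftarrow\mathcal{C}_{\kappa,\lambda,q}$. A short Bayes computation, using that $\textbf{\upshape x}$ is uniform, reveals that for every triple $(\textbf{\upshape A},\tilde{\textbf{\upshape x}},\tilde{\textbf{\upshape e}})$ and $\tilde{\textbf{\upshape y}}:=\textbf{\upshape A}\tilde{\textbf{\upshape x}}+\tilde{\textbf{\upshape e}}$, the quantity $\Pr_{\textbf{\upshape x},\textbf{\upshape e}}[\textbf{\upshape x}=\tilde{\textbf{\upshape x}}\mid\textbf{\upshape Ax}+\textbf{\upshape e}=\tilde{\textbf{\upshape y}}]$ figuring in the map-conditional entropy equals the ordinary posterior probability $p_{\textbf{\upshape A},\tilde{\textbf{\upshape y}}}(\tilde{\textbf{\upshape x}}):=\Pr[\textbf{\upshape x}=\tilde{\textbf{\upshape x}}\mid\textbf{\upshape A},\tilde{\textbf{\upshape y}}]$. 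The lossy property can therefore be recast as
\[\Pr_{\textbf{\upshape A},\tilde{\textbf{\upshape x}},\tilde{\textbf{\upshape e}}}\bigl[p_{\textbf{\upshape A},\tilde{\textbf{\upshape y}}}(\tilde{\textbf{\upshape x}})>2^{-\Delta}\bigr]\le\text{\upshape\sffamily neg}(\kappa).\]
Since the joint distribution of $(\textbf{\upshape A},\textbf{\upshape y})$ produced in $\mathcal{D}$'s simulation is precisely that of $(\textbf{\upshape A},\textbf{\upshape A}\tilde{\textbf{\upshape x}}+\tilde{\textbf{\upshape e}})$ above, the success probability of $\mathcal{A}$ conditional on $(\textbf{\upshape A},\textbf{\upshape y})$ is $\sum_\xi p_{\textbf{\upshape A},\textbf{\upshape y}}(\xi)\,q_{\textbf{\upshape A},\textbf{\upshape y}}(\xi)$, where $q_{\textbf{\upshape A},\textbf{\upshape y}}(\xi):=\Pr[\mathcal{A}(\textbf{\upshape A},\textbf{\upshape y})=\xi]$ is the distribution of $\mathcal{A}$'s output. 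I then split this sum according to whether $p_{\textbf{\upshape A},\textbf{\upshape y}}(\xi)\le 2^{-\Delta}$: on the low-mass part I bound $p_{\textbf{\upshape A},\textbf{\upshape y}}(\xi)$ and use $\sum_\xi q_{\textbf{\upshape A},\textbf{\upshape y}}(\xi)=1$, on the high-mass part I bound $q_{\textbf{\upshape A},\textbf{\upshape y}}(\xi)\le 1$. This gives the pointwise bound $2^{-\Delta}+\Pr_{\xi\sim p_{\textbf{\upshape A},\textbf{\upshape y}}}[p_{\textbf{\upshape A},\textbf{\upshape y}}(\xi)>2^{-\Delta}]$. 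Taking expectation over $(\textbf{\upshape A},\textbf{\upshape y})$ and invoking the displayed inequality yields total success at most $2^{-\Delta}+\text{\upshape\sffamily neg}(\kappa)=\text{\upshape\sffamily neg}(\kappa)$, using $\Delta=\omega(\log\kappa)$. Hence $\mathcal{D}$'s distinguishing advantage is at least $\nu(\kappa)-\text{\upshape\sffamily neg}(\kappa)$, contradicting pseudo-randomness.

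The main obstacle, and the reason this is not completely mechanical, is that the map-conditional entropy used here only controls the posterior mass evaluated at the specific random sample $\tilde{\textbf{\upshape x}}$, whereas the adversary's success naturally asks for a bound on the posterior \emph{maximum}. The splitting trick above is exactly what bridges that gap and justifies the remark following Definition \ref{lossydef} that this weaker entropy notion still suffices for the hardness reduction.
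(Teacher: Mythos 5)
Your reduction is the same one the paper uses: build a distinguisher $\mathcal{D}$ for $\mathcal{C}_{\kappa,\lambda,q}$ versus $\mathcal{U}(\mathbb{Z}_q^{\lambda\times\kappa})$ by handing the hypothetical LWE solver a simulated instance and checking whether it recovers the planted secret. Where you go beyond the paper is in the code case. The paper asserts directly that $\mathcal{T}$ outputs the correct value with probability $\Pr[\textbf{\upshape x}=\tilde{\textbf{\upshape x}}\,|\,\textbf{\upshape Ax}+\textbf{\upshape e}=\textbf{\upshape A}\tilde{\textbf{\upshape x}}+\tilde{\textbf{\upshape e}}]\le 2^{-\Delta}$; strictly speaking this quantity is the posterior mass at $\tilde{\textbf{\upshape x}}$, not the adversary's success probability, which is $\sum_\xi q_{\textbf{\upshape A},\tilde{\textbf{\upshape y}}}(\xi)\,p_{\textbf{\upshape A},\tilde{\textbf{\upshape y}}}(\xi)$ and is a priori only controlled by the posterior \emph{maximum}. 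You explicitly identify this gap and close it with the low-mass/high-mass split, obtaining the pointwise bound $2^{-\Delta}+\Pr_{\xi\sim p}[p(\xi)>2^{-\Delta}]$ and then taking expectations to invoke the lossy property; this is exactly the justification that the paper's remark about map-conditional entropy sufficing is really alluding to, and your proof makes it precise. One small divergence: you never invoke property 3 of Definition \ref{lossydef} (non-lossiness of $\mathcal{U}$); the paper uses it to argue the secret is unique on the uniform side before crediting $\mathcal{T}$ with success probability $\sigma$. Under the paper's definition of Problem $1$ (``find $\textbf{\upshape x}$'', i.e.\ the sampled secret) your simpler accounting is already correct, so this is not a gap, but you should be aware it is a deliberate omission relative to the paper's argument, and under a looser success notion (output any consistent secret) you would need it.
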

\begin{proof}
Due to the non-lossiness of $\mathcal{U}(\mathbb{Z}_q^{\lambda\times\kappa})$ for $\chi$, instances of $\text{\upshape LWE}(\kappa,\lambda,q,\chi)$ have a unique solution with overwhelming probability. Now, let $\mathcal{T}$ be a probabilistic polynomial-time adversary solving the $\text{\upshape LWE}(\kappa,\lambda,q,\chi)$ problem with non-negligible probability $\sigma$. Using $\mathcal{T}$, we will construct a probabilistic polynomial-time distinguisher $\mathcal{D}_{\mbox{\scriptsize LWE}}$ distinguishing $\mathcal{U}(\mathbb{Z}_q^{\lambda\times\kappa})$ and $\{\mathcal{C}_{\kappa,\lambda,q}\}$ with non-negligible advantage.\\
Let $\mathcal{D}_{\mbox{\scriptsize LWE}}$ be given $\textbf{\upshape A}\in\mathbb{Z}^{\lambda\times\kappa}$ as input. It must decide, whether $\textbf{\upshape A}\leftarrow\mathcal{U}(\mathbb{Z}_q^{\lambda\times\kappa})$ or $\textbf{\upshape A}\leftarrow\{\mathcal{C}_{\kappa,\lambda,q}\}$. Therefore, $\mathcal{D}_{\mbox{\scriptsize LWE}}$ samples $\tilde{\textbf{\upshape x}}\leftarrow\mathcal{U}(\mathbb{Z}^\kappa)$ and $\tilde{\textbf{\upshape e}}\leftarrow\chi^\lambda$. It runs $\mathcal{T}$ on input $(\textbf{\upshape A},\textbf{\upshape A}\tilde{\textbf{\upshape x}}+\tilde{\textbf{\upshape e}})$. Then $\mathcal{T}$ outputs some $\textbf{\upshape x}\in\mathbb{Z}^\kappa$. If $\textbf{\upshape x}=\tilde{\textbf{\upshape x}}$, then $\mathcal{D}_{\mbox{\scriptsize LWE}}$ outputs $1$, otherwise it outputs $0$.\\
If $\textbf{\upshape A}\leftarrow\mathcal{U}(\mathbb{Z}_q^{\lambda\times\kappa})$, then $\tilde{\textbf{\upshape x}}$ is unique and then $\textbf{\upshape x}=\tilde{\textbf{\upshape x}}$ with probability $\sigma$. Therefore 
\[\Pr[\mathcal{D}_{\mbox{\scriptsize LWE}}(\textbf{\upshape A})=1\,|\,\textbf{\upshape A}\leftarrow\mathcal{U}(\mathbb{Z}_q^{\lambda\times\kappa})]=\sigma.\] 
If $\textbf{\upshape A}\leftarrow\{\mathcal{C}_{\kappa,\lambda,q}\}$, then $\mathcal{T}$ outputs the correct value with probability \[\Pr[\textbf{x}=\tilde{\textbf{x}}\,|\,\textbf{Ax} + \textbf{e}=\textbf{A}\tilde{\textbf{x}} + \tilde{\textbf{e}}]=2^{-H_{f_{\textbf{\upshape A},\textbf{\upshape e}},f_{\textbf{\upshape A},\tilde{\textbf{\upshape e}}},\tilde{\textbf{\upshape x}}}(\textbf{\upshape x})}\leq 2^{-\Delta},\] 
with $f_{\textbf{\upshape B},\textbf{\upshape b}}(\textbf{\upshape y})=\textbf{\upshape B}\textbf{\upshape y} + \textbf{\upshape b}$. This holds with overwhelming probability over the choice of $(\textbf{\upshape A},\tilde{\textbf{\upshape x}},\tilde{\textbf{\upshape e}})$. This probability is negligible in $\kappa$, since $\Delta=\omega(\log(\kappa))$. Therefore \[\Pr[\mathcal{D}_{\mbox{\scriptsize LWE}}(\textbf{\upshape A})=1\,|\,\textbf{\upshape A}\leftarrow\{\mathcal{C}_{\kappa,\lambda,q}\}]=\text{\upshape\sffamily neg}(\kappa)\] 
and in conclusion $\mathcal{D}_{\mbox{\scriptsize LWE}}$ distinguishes $\mathcal{U}(\mathbb{Z}_q^{\lambda\times\kappa})$ and $\{\mathcal{C}_{\kappa,\lambda,q}\}$ with probability at least $\sigma-\text{\upshape\sffamily neg}(\kappa)$, which is non-negligible.
\end{proof}


We will use the fact that $\mathcal{U}(\mathbb{Z}_q^{\lambda\times\kappa})$ is always non-lossy if the corresponding error distribution $\chi$ can be bounded.

\begin{Lem}[Non-lossiness of $\mathcal{U}(\mathbb{Z}_q^{\lambda\times\kappa})$]\label{nonloss}
Let $\kappa$ be a security parameter and $\chi$ a probability distribution on $\mathbb{Z}$. Say, the support of $\chi$ can be bounded by $r=r(\kappa)=\text{\upshape\sffamily poly}(\kappa)$. Moreover, let $q>(4r+1)^{1+\tau}$ for a constant $\tau>0$ and $\lambda=\lambda(\kappa)>(1+2/\tau)\kappa$. Let $f_{\textbf{\upshape B},\textbf{\upshape b}}(\textbf{\upshape y})=\textbf{\upshape B}\textbf{\upshape y} + \textbf{\upshape b}$. Let $\textbf{\upshape A}\leftarrow\mathcal{U}(\mathbb{Z}_q^{\lambda\times\kappa})$, $\tilde{\textbf{\upshape x}}\leftarrow\mathcal{U}(\mathbb{Z}_q^\kappa), \tilde{\textbf{\upshape e}}\leftarrow\chi^\lambda$, let $\textbf{\upshape x}\leftarrow\mathcal{U}(\mathbb{Z}_q^\kappa)$ and $\textbf{\upshape e}\leftarrow\chi^\lambda$. Then
\[\Pr_{(\textbf{\upshape A},\tilde{\textbf{\upshape x}},\tilde{\textbf{\upshape e}})}[H_{f_{\textbf{\upshape A},\textbf{\upshape e}},f_{\textbf{\upshape A},\tilde{\textbf{\upshape e}}},\tilde{\textbf{\upshape x}}}(\textbf{\upshape x})=0]\geq 1-\text{\upshape\sffamily neg}(\kappa).\]
\end{Lem}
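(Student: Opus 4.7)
My plan is to unpack the entropy condition into a combinatorial uniqueness statement and then bound the failure probability by a union bound, leveraging the fact that $\textbf{\upshape Az}$ is uniform for every fixed nonzero $\textbf{\upshape z}$ — which is where the primality of $q$ (a standing convention of the paper) enters.

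First I would observe that the event $H_{f_{\textbf{\upshape A},\textbf{\upshape e}},f_{\textbf{\upshape A},\tilde{\textbf{\upshape e}}},\tilde{\textbf{\upshape x}}}(\textbf{\upshape x})=0$ holds exactly when no pair $(\textbf{\upshape x},\textbf{\upshape e})\in\mathbb{Z}_q^\kappa\times\text{Supp}(\chi^\lambda)$ other than $(\tilde{\textbf{\upshape x}},\tilde{\textbf{\upshape e}})$ satisfies $\textbf{\upshape Ax}+\textbf{\upshape e}=\textbf{\upshape A}\tilde{\textbf{\upshape x}}+\tilde{\textbf{\upshape e}}$; here I use that the conditioning event has positive probability, witnessed by $(\tilde{\textbf{\upshape x}},\tilde{\textbf{\upshape e}})$ itself. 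Substituting $\textbf{\upshape z}=\textbf{\upshape x}-\tilde{\textbf{\upshape x}}\neq\textbf{\upshape 0}$ and $\textbf{\upshape w}=\tilde{\textbf{\upshape e}}-\textbf{\upshape e}\in\{-2r,\ldots,2r\}^\lambda$, the ``bad'' event $B$ (i.e., the complementary event $H>0$) is therefore contained in
\[\{\exists\,\textbf{\upshape z}\in\mathbb{Z}_q^\kappa\setminus\{\textbf{\upshape 0}\},\,\textbf{\upshape w}\in\{-2r,\ldots,2r\}^\lambda:\,\textbf{\upshape Az}=\textbf{\upshape w}\},\]
which no longer depends on $\tilde{\textbf{\upshape x}}$ or $\tilde{\textbf{\upshape e}}$.

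Second, I would exploit that since $q$ is prime, for every fixed nonzero $\textbf{\upshape z}\in\mathbb{Z}_q^\kappa$ the vector $\textbf{\upshape Az}$ is uniformly distributed on $\mathbb{Z}_q^\lambda$ when $\textbf{\upshape A}\leftarrow\mathcal{U}(\mathbb{Z}_q^{\lambda\times\kappa})$ (some coordinate of $\textbf{\upshape z}$ is a unit, making each row's inner product with $\textbf{\upshape z}$ uniform). Consequently $\Pr_{\textbf{\upshape A}}[\textbf{\upshape Az}=\textbf{\upshape w}]=q^{-\lambda}$ for any fixed $\textbf{\upshape w}$, and a union bound over the (at most) $q^\kappa$ candidate $\textbf{\upshape z}$ and the $(4r+1)^\lambda$ candidate $\textbf{\upshape w}$ gives $\Pr[B]\leq q^\kappa\cdot(4r+1)^\lambda\cdot q^{-\lambda}$.

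Finally I would plug in the hypotheses: $q>(4r+1)^{1+\tau}$ collapses the bound to $(4r+1)^{(1+\tau)\kappa-\tau\lambda}$, and the assumption $\lambda>(1+2/\tau)\kappa$ pushes the exponent strictly below $-\kappa$, so $\Pr[B]<(4r+1)^{-\kappa}=\text{\upshape\sffamily neg}(\kappa)$. I do not foresee a real obstacle: the argument is a transparent counting bound, and the only step that requires attention is the uniformity of $\textbf{\upshape Az}$ for nonzero $\textbf{\upshape z}$, which would fail for a composite modulus but is secured by the paper's standing primality assumption on $q$.
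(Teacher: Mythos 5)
Your proof is correct, and it is essentially the argument the paper defers to: the paper cites Lemma 3 of \cite{39}, which establishes exactly the union bound $\Pr_{\textbf{A}}[\exists\, \textbf{z}\neq\textbf{0},\textbf{w}\in\{-2r,\ldots,2r\}^\lambda : \textbf{Az}=\textbf{w}]\leq q^{\kappa}(4r+1)^{\lambda}q^{-\lambda}$ and then uses the hypotheses on $q$ and $\lambda$ to make it negligible. You have simply written out in full the counting step the paper leaves as a citation, correctly noting the roles of primality of $q$ and of the central residue-class representation.
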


The proof of Lemma \ref{nonloss} is identical to the proof of Lemma $3$ in \cite{39}: since $\Pr_{\textbf{\upshape A}}[||\textbf{\upshape A}\tilde{\textbf{\upshape x}}||_\infty\leq 2r]\leq\text{\upshape\sffamily neg}(\kappa)$ (as shown in \cite{39}) and $||\tilde{\textbf{\upshape e}}-\textbf{\upshape e}||_\infty\leq 2r$ for independent $\textbf{\upshape e},\tilde{\textbf{\upshape e}}\leftarrow\chi^\lambda$ (where the norm is computed in the central residue-class representation of the elements in $\mathbb{Z}_q$), with probability $1-\text{\upshape\sffamily neg}(\kappa)$ there cannot exist any $\textbf{\upshape x}\in\mathbb{Z}_q^\kappa$ with $\textbf{\upshape x}\neq\tilde{\textbf{\upshape x}}$ and $\textbf{\upshape A}\textbf{\upshape x}+\textbf{\upshape e}=\textbf{\upshape A}\tilde{\textbf{\upshape x}}+\tilde{\textbf{\upshape e}}$.

\section{Proof of the Hardness Result}\label{hardn}

Now we construct the lossy code for the Skellam distribution. It is essentially the same construction that was used as lossy code for the uniform error distribution in \cite{39}.

\begin{Ctn}[Lossy code for the symmetric Skellam distribution]\label{lossysk}
Let $\kappa$ be an even security parameter, let $\lambda=\lambda(\kappa)=\text{\upshape\sffamily poly}(\kappa)$, $\nu>0$ and let $q=q(\kappa)$ be a prime modulus. The distribution $\mathcal{C}_{\kappa,\lambda,q,\nu}$ defined on $\mathbb{Z}_q^{\lambda\times\kappa}$ is specified as follows. 
Choose $\textbf{\upshape A}^\prime\leftarrow\mathcal{U}(\mathbb{Z}_q^{\lambda\times\kappa/2})$, $\textbf{\upshape T}\leftarrow\mathcal{U}(\mathbb{Z}_q^{\kappa/2\times\kappa/2})$ and $\textbf{\upshape G}\leftarrow D_\nu^{\lambda\times\kappa/2}$. 
Output \[\textbf{\upshape A}=(\textbf{\upshape A}^\prime||\textbf{\upshape A}^\prime\textbf{\upshape T}+\textbf{\upshape G}).\] 
\end{Ctn}

From Lemma \ref{lwematver} and Theorem \ref{lwestod} it is straightforward that $\mathcal{C}_{\kappa,\lambda,q,\nu}$ is pseudo-random in the sense of property $1$ of Definition \ref{lossydef} assuming the hardness of the $\text{\upshape LWE}(\kappa,\lambda,q,D_\nu)$ problem.

\begin{Lem}[Pseudo-randomness of Construction $\ref{lossysk}$ \cite{39}]\label{lossypr}
For Construction $\ref{lossysk}$ it holds that $\mathcal{C}_{\kappa,\lambda,q,\nu}\approx_c\mathcal{U}(\mathbb{Z}_q^{\lambda\times\kappa})$ assuming the hardness of the $\text{\upshape LWE}(\kappa,\lambda,q,D_\nu)$ problem.
\end{Lem}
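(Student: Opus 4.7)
The plan is to reduce pseudo-randomness of the output $\textbf{\upshape A}=(\textbf{\upshape A}^\prime||\textbf{\upshape A}^\prime\textbf{\upshape T}+\textbf{\upshape G})$ directly to the matrix version of LWE. Since $\textbf{\upshape A}^\prime\leftarrow\mathcal{U}(\mathbb{Z}_q^{\lambda\times\kappa/2})$ already has the uniform marginal distribution and is independent of both $\textbf{\upshape T}$ and $\textbf{\upshape G}$, it suffices to argue that, conditioned on $\textbf{\upshape A}^\prime$, the right block $\textbf{\upshape A}^\prime\textbf{\upshape T}+\textbf{\upshape G}$ is computationally indistinguishable from a fresh uniform matrix in $\mathbb{Z}_q^{\lambda\times\kappa/2}$. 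The concatenation will then be indistinguishable from $\mathcal{U}(\mathbb{Z}_q^{\lambda\times\kappa})$, as desired.

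To handle the right block, observe that $\textbf{\upshape A}^\prime\textbf{\upshape T}+\textbf{\upshape G}$ is literally a matrix-LWE sample: the public matrix is $\textbf{\upshape A}^\prime$, the uniform secret is $\textbf{\upshape T}\in\mathbb{Z}_q^{\kappa/2\times\kappa/2}$, and the noise matrix $\textbf{\upshape G}\in\mathbb{Z}_q^{\lambda\times\kappa/2}$ has i.i.d.\ entries drawn from $D_\nu$. Invoking Lemma \ref{lwematver} at secret dimension $\kappa/2$ with $\kappa^\prime=\kappa/2$, pseudo-randomness of the pair $(\textbf{\upshape A}^\prime,\textbf{\upshape A}^\prime\textbf{\upshape T}+\textbf{\upshape G})$ reduces to the hardness of $\text{\upshape DLWE}(\kappa/2,\lambda,q,D_\nu)$. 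The latter follows from the hardness of $\text{\upshape LWE}(\kappa,\lambda,q,D_\nu)$ via the sample-preserving search-to-decision reduction of Theorem \ref{lwestod}, after reparameterizing $\kappa/2$ as the polynomially related security parameter for the LWE hardness assumption.

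The main obstacle, such as it is, is bookkeeping: one must identify the LWE instance embedded inside the construction, verify that the error matrix $\textbf{\upshape G}$ uses the same distribution $D_\nu$ appearing in the assumed LWE hardness, and note that the hybrid argument over the $\kappa/2$ columns of $\textbf{\upshape T}$ is already absorbed inside Lemma \ref{lwematver}. No additional technical ingredient beyond Lemma \ref{lwematver} and Theorem \ref{lwestod} is required, which matches the fact that the statement is attributed to $\cite{39}$.
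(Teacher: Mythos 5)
Your proposal is correct and follows the same route the paper intends: the paper gives no detailed proof, merely stating that the claim is straightforward from Lemma \ref{lwematver} and Theorem \ref{lwestod}, and you supply exactly the expected fleshing-out (identifying $(\textbf{\upshape A}^\prime,\textbf{\upshape A}^\prime\textbf{\upshape T}+\textbf{\upshape G})$ as a matrix-LWE sample with secret dimension $\kappa/2$ and $\kappa^\prime=\kappa/2$, then invoking the matrix version and the search-to-decision reduction). You also correctly flag the $\kappa$ vs.\ $\kappa/2$ dimension bookkeeping, which the paper glosses over; your resolution via reparameterization under the worst-to-average-case hardness is the standard and correct way to handle it.
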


Let $\textbf{\upshape A}=(\textbf{\upshape A}^\prime||\textbf{\upshape A}^\prime\textbf{\upshape T}+\textbf{\upshape G})$ be the code as defined in Construction \ref{lossysk}. We show that in our further analysis we can restrict ourselves to considering only $\textbf{\upshape G}$ instead of $\textbf{\upshape A}$.

\begin{Lem}\label{ginsteadofa}
Let $\kappa$ be an even integer. Let $\textbf{\upshape A}=(\textbf{\upshape A}^\prime||\textbf{\upshape A}^\prime\textbf{\upshape T}+\textbf{\upshape G})$ with $\textbf{\upshape A}^\prime\in\mathbb{Z}_q^{\lambda\times\kappa/2}$, $\textbf{\upshape T}\in\mathbb{Z}_q^{\kappa/2\times\kappa/2}$ and $\textbf{\upshape G}\in\mathbb{Z}_q^{\lambda\times\kappa/2}$. Then for all $\textbf{\upshape x}\in\mathbb{Z}_q^{\kappa/2}$ there is a $\textbf{\upshape x}^\prime\in\mathbb{Z}_q^\kappa$ with $\textbf{\upshape Ax}^\prime=\textbf{\upshape Gx}$.
\end{Lem}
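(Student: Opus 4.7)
The plan is to exhibit $\textbf{\upshape x}^\prime$ explicitly by splitting it into the two halves corresponding to the block structure of $\textbf{\upshape A}$. Write $\textbf{\upshape x}^\prime=\begin{pmatrix}\textbf{\upshape x}_1\\ \textbf{\upshape x}_2\end{pmatrix}$ with $\textbf{\upshape x}_1,\textbf{\upshape x}_2\in\mathbb{Z}_q^{\kappa/2}$. Then
\[
\textbf{\upshape A}\textbf{\upshape x}^\prime = \textbf{\upshape A}^\prime\textbf{\upshape x}_1+(\textbf{\upshape A}^\prime\textbf{\upshape T}+\textbf{\upshape G})\textbf{\upshape x}_2 = \textbf{\upshape A}^\prime(\textbf{\upshape x}_1+\textbf{\upshape T}\textbf{\upshape x}_2)+\textbf{\upshape G}\textbf{\upshape x}_2.
\]
The first summand is what we want to kill, the second is what we want to match with $\textbf{\upshape G}\textbf{\upshape x}$.

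So the choice presents itself: take $\textbf{\upshape x}_2=\textbf{\upshape x}$ and $\textbf{\upshape x}_1=-\textbf{\upshape T}\textbf{\upshape x}$. This makes $\textbf{\upshape x}_1+\textbf{\upshape T}\textbf{\upshape x}_2=\textbf{0}$, so the $\textbf{\upshape A}^\prime$-term vanishes and only $\textbf{\upshape G}\textbf{\upshape x}_2=\textbf{\upshape G}\textbf{\upshape x}$ remains. Since $\textbf{\upshape T}$ is given and $-\textbf{\upshape T}\textbf{\upshape x}\in\mathbb{Z}_q^{\kappa/2}$, the vector $\textbf{\upshape x}^\prime=\begin{pmatrix}-\textbf{\upshape T}\textbf{\upshape x}\\ \textbf{\upshape x}\end{pmatrix}\in\mathbb{Z}_q^\kappa$ is well-defined and achieves the desired equality.

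There is really no obstacle here: the lemma is an immediate algebraic observation, reflecting the fact that in Construction~\ref{lossysk} the last $\kappa/2$ columns of $\textbf{\upshape A}$ differ from the $\textbf{\upshape A}^\prime$-span only by $\textbf{\upshape G}$, so every linear combination of the columns of $\textbf{\upshape G}$ lies in the image of $\textbf{\upshape A}$. The point of the lemma is the conceptual reduction it enables for later arguments: decoding questions about $\textbf{\upshape A}\textbf{\upshape y}$ can be reduced to questions about $\textbf{\upshape G}\textbf{\upshape x}$, i.e., to a purely discrete-Gaussian-coded object, which is the object one actually has analytic control over via Lemmas~\ref{disgausbound} and~\ref{modbesrat}.
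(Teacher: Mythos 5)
Your proof is correct and is essentially the paper's argument: the paper writes $\textbf{\upshape x}^\prime=\widetilde{\textbf{\upshape T}}^{-1}(\textbf{0}^{\text{tr}}||\textbf{\upshape x}^{\text{tr}})^{\text{tr}}$ with $\widetilde{\textbf{\upshape T}}=\begin{pmatrix}\textbf{\upshape I}&\textbf{\upshape T}\\ \textbf{0}&\textbf{\upshape I}\end{pmatrix}$, which unwinds to exactly your $\textbf{\upshape x}^\prime=\begin{pmatrix}-\textbf{\upshape T}\textbf{\upshape x}\\ \textbf{\upshape x}\end{pmatrix}$. You simply skip the block-matrix factorization and verify the explicit vector directly.
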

\begin{proof}
Define \[\widetilde{\textbf{\upshape T}}=\begin{pmatrix} \textbf{\upshape I} & \textbf{\upshape T}\\ \textbf{\upshape 0} & \textbf{\upshape I}\end{pmatrix}.\] Note that $\widetilde{\textbf{\upshape T}}\in\mathbb{Z}_q^{\kappa\times\kappa}$ is a regular matrix. For all $\textbf{\upshape x}\in\mathbb{Z}_q^{\kappa/2}$ set $\textbf{\upshape x}^\prime=\widetilde{\textbf{\upshape T}}^{-1}\cdot(\textbf{0}^{\text{\scriptsize tr}}||\textbf{\upshape x}^{\text{\scriptsize tr}})^{\text{\scriptsize tr}}$. Then 
\begin{align*} \textbf{\upshape Ax}^\prime = & (\textbf{\upshape A}^\prime||\textbf{\upshape A}^\prime\textbf{\upshape T}+\textbf{\upshape G})\cdot\textbf{\upshape x}^\prime\\
= & (\textbf{\upshape A}^\prime||\textbf{\upshape G})\cdot\widetilde{\textbf{\upshape T}}\cdot\widetilde{\textbf{\upshape T}}^{-1}\cdot(\textbf{0}^{\text{\scriptsize tr}}||\textbf{\upshape x}^{\text{\scriptsize tr}})^{\text{\scriptsize tr}}\\
= & (\textbf{\upshape A}^\prime||\textbf{\upshape G})\cdot(\textbf{0}^{\text{\scriptsize tr}}||\textbf{\upshape x}^{\text{\scriptsize tr}})^{\text{\scriptsize tr}}\\
= & \textbf{\upshape Gx}.
\end{align*}
\end{proof}

To show that Construction \ref{lossysk} is a lossy code for the symmetric Skellam distribution we prove that the second property of Definition \ref{lossydef} is satisfied. We first prove two supporting claims and then show the lossiness of Construction \ref{lossysk}.

\begin{Lem}\label{techlem}
$-C+\sqrt{C^2+1}\geq\exp(-C)$ for all $C\geq 0$.
\end{Lem}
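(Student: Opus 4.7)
The plan is to rewrite the left-hand side in a form that makes the inequality amount to the elementary fact $\operatorname{arsinh}(C)\le C$. Specifically, rationalising the numerator gives
\[
-C+\sqrt{C^2+1}=\frac{(\sqrt{C^2+1}-C)(\sqrt{C^2+1}+C)}{\sqrt{C^2+1}+C}=\frac{1}{C+\sqrt{C^2+1}},
\]
so the claimed inequality $-C+\sqrt{C^2+1}\geq e^{-C}$ is equivalent to
\[
C+\sqrt{C^2+1}\leq e^{C}.
\]

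Next I would take logarithms and recognise the left-hand side as $\operatorname{arsinh}(C)=\log(C+\sqrt{C^2+1})$. The inequality therefore reduces to showing $\operatorname{arsinh}(C)\leq C$ for all $C\ge 0$. This is immediate from the mean value theorem: $\operatorname{arsinh}(0)=0$, and
\[
\frac{d}{dC}\operatorname{arsinh}(C)=\frac{1}{\sqrt{1+C^2}}\leq 1,
\]
so integrating from $0$ to $C$ yields $\operatorname{arsinh}(C)\leq C$, with equality only at $C=0$.

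There is essentially no obstacle here; the only non-routine step is spotting the rationalisation that turns a hard-looking mixed transcendental/algebraic inequality into the bound $\operatorname{arsinh}(C)\leq C$. A direct approach via $f(C)=-C+\sqrt{C^2+1}-e^{-C}$ with $f(0)=0$ and analysis of $f'$ also works but is less transparent, since $f$ and several of its derivatives vanish at $0$; the rationalisation trick bypasses this. Note also that this matches the equality case $C=0$, where both sides equal $1$, which provides a useful sanity check for the proof.
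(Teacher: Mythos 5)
Your proposal is correct. It is, up to a logarithmic change of variables, the same strategy as the paper's: the paper sets $f(C)=(-C+\sqrt{C^2+1})e^{C}$, notes $f(0)=1$, and asserts $f$ is monotonically increasing; taking $\log f(C)$ gives exactly $C-\operatorname{arsinh}(C)$, so the paper's monotonicity claim is precisely your statement that $\operatorname{arsinh}(C)\le C$. The difference is one of execution, and your version is the cleaner one. The paper leaves the monotonicity of $f$ unverified, and checking it directly requires showing $C^2+C+1\ge(1+C)\sqrt{C^2+1}$, which takes a squaring argument; after your rationalisation and log, the derivative bound $\tfrac{1}{\sqrt{1+C^2}}\le 1$ is immediate. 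So both proofs rest on the same underlying monotonicity fact, but your rationalisation step turns a slightly opaque derivative computation into a one-line estimate. No gaps.
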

\begin{proof}
Let $f(C)=(-C+\sqrt{C^2+1})\exp(C)$. Then $f(C)$ is monotonically increasing and $f(0)=1\cdot(-0+\sqrt{0+1})=1$.
\end{proof}

\begin{Lem}\label{smallnormvec}
Let $\kappa$ be a security parameter, let $s=s(\kappa)=\omega(\log(\kappa))$ and let $\nu=\nu(\kappa)=\text{\upshape\sffamily poly}(\kappa)$. Let $\lambda=\lambda(\kappa), \zeta=\zeta(\kappa)$ be integers. Let $\textbf{\upshape G}\leftarrow D_\nu^{\lambda\times\zeta}$. Then for all $\textbf{\upshape z}\in\{0,1\}^\zeta$ the following hold:
\begin{enumerate}
\item $\Pr[||\textbf{\upshape Gz}||_1>\lambda\sqrt{\zeta\nu s}]\leq\text{\upshape\sffamily neg}(\kappa).$
\item $\Pr[||\textbf{\upshape Gz}||_2^2>\lambda\zeta\nu s]\leq\text{\upshape\sffamily neg}(\kappa).$
\end{enumerate}
\end{Lem}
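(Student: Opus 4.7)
The plan is to reduce each norm bound to a per-row tail bound on the coordinates of $\textbf{Gz}$ and then take a union bound over the $\lambda$ rows, invoking Lemma \ref{disgausbound} for each row.

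First, fix an arbitrary $\textbf{z}\in\{0,1\}^\zeta$ and let $S=\{j:z_j=1\}$, so $|S|\leq\zeta$. For every row index $i\in\{1,\ldots,\lambda\}$, the entry $(\textbf{Gz})_i=\sum_{j\in S}G_{ij}$ is a sum of $|S|$ independent samples from $D_\nu$ (here the rows are drawn independently, but we only need the entries within a single row to be independent). Lemma \ref{disgausbound} then yields
\[\Pr\bigl[|(\textbf{Gz})_i|>\sqrt{\zeta\nu s}\bigr]\leq\text{\upshape\sffamily neg}(\kappa),\]
using that $\sqrt{|S|\nu s}\leq\sqrt{\zeta\nu s}$, so that the lemma's bound for $|S|$ summands implies the same conclusion above the (possibly larger) threshold $\sqrt{\zeta\nu s}$. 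In the trivial case $|S|=0$ one has $(\textbf{Gz})_i=0$ deterministically, so the bound still holds.

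Second, apply a union bound over the $\lambda=\text{\upshape\sffamily poly}(\kappa)$ rows. Since a polynomial times a negligible function is still negligible, the event $\mathcal{E}:=\{\forall i\in[\lambda]:|(\textbf{Gz})_i|\leq\sqrt{\zeta\nu s}\}$ holds with probability $1-\text{\upshape\sffamily neg}(\kappa)$. Conditional on $\mathcal{E}$, summing the row-wise bounds gives
\[\|\textbf{Gz}\|_1=\sum_{i=1}^\lambda|(\textbf{Gz})_i|\leq\lambda\sqrt{\zeta\nu s}\qquad\text{and}\qquad\|\textbf{Gz}\|_2^2=\sum_{i=1}^\lambda|(\textbf{Gz})_i|^2\leq\lambda\zeta\nu s,\]
which establishes (1) and (2) respectively.

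There is essentially no deep obstacle in this argument: the concentration is already fully encapsulated in Lemma \ref{disgausbound}, and the factor $\lambda$ paid in the union bound matches precisely the $\lambda$ appearing in both target thresholds. The only subtlety worth flagging is checking that Lemma \ref{disgausbound} can be applied when the number of nonzero summands is at most $\zeta$ rather than exactly $\zeta$; this follows immediately from the monotonicity of $\sqrt{\zeta\nu s}$ in $\zeta$.
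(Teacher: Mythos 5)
Your proof is correct and fills in exactly the reasoning the paper leaves implicit: the paper's proof of Lemma \ref{smallnormvec} is the single sentence ``The claims follow from Lemma \ref{disgausbound},'' and the row-wise reduction plus union bound you spell out is the natural way to make that precise. The only point worth making explicit is that the union bound over $\lambda$ rows costs a factor of $\lambda$, so one needs $\lambda=\text{\upshape\sffamily poly}(\kappa)$ to keep the failure probability negligible; the lemma statement only says ``$\lambda=\lambda(\kappa)$ an integer,'' but this is tacitly polynomial throughout the paper and in the single place the lemma is invoked, so your proof is consistent with the intended reading.
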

\begin{proof}
The claims follow from Lemma \ref{disgausbound}.
\end{proof}

\begin{Lem}[Lossiness of Construction $\ref{lossysk}$]\label{lossysklem}
Let $\kappa$ be an even security parameter, $s=s(\kappa)=\omega(\log(\kappa))$ with $\kappa>s^3$, let $\nu=\nu(\kappa)$, let $q=\text{\upshape\sffamily poly}(\kappa)$ be a sufficiently large prime modulus, let $\lambda=\lambda(\kappa)> s$ and let $\Delta=\Delta(\kappa)=\omega(\log(\kappa))$. Let $\mu=\mu(\kappa)\geq\lambda^2\nu$. Let $f_{\textbf{\upshape B},\textbf{\upshape b}}(\textbf{\upshape y})=\textbf{\upshape B}\textbf{\upshape y} + \textbf{\upshape b}$. Let $\textbf{\upshape A}\leftarrow\{\mathcal{C}_{\kappa,\lambda,q,\nu}\}$, let $\tilde{\textbf{\upshape x}}\leftarrow\mathcal{U}(\mathbb{Z}_q^\kappa), \tilde{\textbf{\upshape e}}\leftarrow\text{\upshape Sk}_\mu^\lambda$, let $\textbf{\upshape x}\leftarrow\mathcal{U}(\mathbb{Z}_q^\kappa)$ and $\textbf{\upshape e}\leftarrow\text{\upshape Sk}_\mu^\lambda$. Then
\[\Pr_{(\textbf{\upshape A},\tilde{\textbf{\upshape x}},\tilde{\textbf{\upshape e}})}[H_{f_{\textbf{\upshape A},\textbf{\upshape e}},f_{\textbf{\upshape A},\tilde{\textbf{\upshape e}}},\tilde{\textbf{\upshape x}}}(\textbf{\upshape x})\geq\Delta]\geq 1-\text{\upshape\sffamily neg}(\kappa).\]
\end{Lem}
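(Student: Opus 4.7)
The plan is to use the Bayes formula for the map-conditional entropy and show that the denominator of the resulting ratio dominates the numerator by a factor of $2^\Delta$ with overwhelming probability. Since $\textbf{\upshape x}, \tilde{\textbf{\upshape x}}$ are uniform and $\textbf{\upshape e}, \tilde{\textbf{\upshape e}}$ are i.i.d.\ $\text{\upshape Sk}_\mu^\lambda$, writing $\psi_\mu^{\,\lambda}$ for the product PMF and $\textbf{\upshape z} = \tilde{\textbf{\upshape x}} - \textbf{\upshape x}$,
\[
\Pr\bigl[\textbf{\upshape x}=\tilde{\textbf{\upshape x}} \,\bigm|\, \textbf{\upshape A}\textbf{\upshape x}+\textbf{\upshape e} = \textbf{\upshape A}\tilde{\textbf{\upshape x}}+\tilde{\textbf{\upshape e}}\bigr] \;=\; \frac{\psi_\mu^{\,\lambda}(\tilde{\textbf{\upshape e}})}{\sum_{\textbf{\upshape z}\in\mathbb{Z}_q^\kappa}\psi_\mu^{\,\lambda}(\tilde{\textbf{\upshape e}}+\textbf{\upshape A}\textbf{\upshape z})}.
\]
So the goal reduces to bounding the denominator below by $2^\Delta \cdot \psi_\mu^{\,\lambda}(\tilde{\textbf{\upshape e}})$. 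The first concrete step is to discard the $\textbf{\upshape A}'$-block: by Lemma \ref{ginsteadofa} every $\textbf{\upshape z} \in \{0,1\}^{\kappa/2}$ lifts injectively to some $\textbf{\upshape z}' \in \mathbb{Z}_q^\kappa$ with $\textbf{\upshape A}\textbf{\upshape z}' = \textbf{\upshape G}\textbf{\upshape z}$, and the denominator is therefore at least $\sum_{\textbf{\upshape z}\in\{0,1\}^{\kappa/2}}\psi_\mu^{\,\lambda}(\tilde{\textbf{\upshape e}}+\textbf{\upshape G}\textbf{\upshape z})$, a sum of $2^{\kappa/2}$ Gaussian-shifted terms controlled only by $\textbf{\upshape G}$.

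Next I would bound the log-ratio $\log[\psi_\mu^{\,\lambda}(\tilde{\textbf{\upshape e}}+\textbf{\upshape G}\textbf{\upshape z})/\psi_\mu^{\,\lambda}(\tilde{\textbf{\upshape e}})]$ from below coordinatewise. Since the symmetric Skellam PMF equals $e^{-\mu}I_{|k|}(\mu)$, the single-coordinate log-ratio telescopes into increments $\log(I_j(\mu)/I_{j-1}(\mu))$, and Lemma \ref{modbesrat} combined with Lemma \ref{techlem} (which rewrites the ratio bound as $I_j(\mu)/I_{j-1}(\mu) \geq \exp(-j/\mu)$) yields, uniformly in the signs of $k$ and $d$,
\[
\log\frac{\psi_\mu(k+d)}{\psi_\mu(k)} \;\geq\; -\,\frac{|d|\,(2|k|+|d|+1)}{2\mu}.
\]
Summing over the $\lambda$ coordinates and applying Cauchy--Schwarz then gives
\[
\log\frac{\psi_\mu^{\,\lambda}(\tilde{\textbf{\upshape e}}+\textbf{\upshape G}\textbf{\upshape z})}{\psi_\mu^{\,\lambda}(\tilde{\textbf{\upshape e}})} \;\geq\; -\,\frac{1}{2\mu}\Bigl(2\,\|\textbf{\upshape G}\textbf{\upshape z}\|_2\,\|\tilde{\textbf{\upshape e}}\|_2 + \|\textbf{\upshape G}\textbf{\upshape z}\|_2^2 + \|\textbf{\upshape G}\textbf{\upshape z}\|_1\Bigr).
\]

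To turn this into a quantitative statement I would invoke the tail bounds of the two distributions. Lemma \ref{skellbound}, applied to each coordinate of $\tilde{\textbf{\upshape e}}$ together with a union bound over the $\lambda = \text{\upshape\sffamily poly}(\kappa)$ coordinates, gives $\|\tilde{\textbf{\upshape e}}\|_2 \leq s\sqrt{\lambda\mu}$ with overwhelming probability. For the Gaussian-dependent norms Lemma \ref{smallnormvec} gives, for each fixed $\textbf{\upshape z}$, the bounds $\|\textbf{\upshape G}\textbf{\upshape z}\|_2^2 \leq \lambda(\kappa/2)\nu s$ and $\|\textbf{\upshape G}\textbf{\upshape z}\|_1 \leq \lambda\sqrt{(\kappa/2)\nu s}$ with negligible failure probability; rather than union-bounding over the $2^{\kappa/2}$ indices (which the $\exp(-\Omega(s))$ discrete-Gaussian tail cannot afford) I would use a Markov argument over $\textbf{\upshape G}$ to conclude that a $1-o(1)$ fraction of $\textbf{\upshape z}\in\{0,1\}^{\kappa/2}$ satisfy both norm bounds simultaneously. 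Substituting these estimates and using $\mu \geq \lambda^2\nu$ (so $\lambda\sqrt{\nu/\mu} \leq 1$) collapses the dominant contribution to $-s^{3/2}\sqrt{\kappa/2}$; by the hypothesis $\kappa > s^3$ this is strictly below $\kappa/2 - \Delta$ (since $\Delta = \omega(\log\kappa)$ leaves room). Summing over the $(1-o(1))\cdot 2^{\kappa/2}$ good $\textbf{\upshape z}$ then produces $\sum \psi_\mu^{\,\lambda}(\tilde{\textbf{\upshape e}}+\textbf{\upshape G}\textbf{\upshape z}) \geq 2^\Delta\,\psi_\mu^{\,\lambda}(\tilde{\textbf{\upshape e}})$, as required.

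The main obstacle I anticipate is the numerical bookkeeping around the Bessel-ratio estimate: the per-coordinate bound must stay linear in both $|\tilde{e}_i|$ and $|g_i|$ so that, after Cauchy--Schwarz and the Skellam/Gaussian tail bounds, the dominant term scales as $s^{3/2}\sqrt{\kappa}$ and no worse, so that $\kappa > s^3$ actually absorbs the loss into the $2^{\kappa/2}$ budget. Closely related is the need to replace the naive union bound over $\{0,1\}^{\kappa/2}$ by a Markov-type averaging, since nothing in the hypotheses forces the discrete-Gaussian tail to be sub-exponential in $\kappa$; this is what makes the restriction of the sum to ``good'' $\textbf{\upshape z}$, rather than to all of them, essential.
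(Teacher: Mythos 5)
Your plan tracks the paper's proof closely and is essentially correct: Bayes rule to reduce the map-conditional entropy to the ratio $\psi_\mu^\lambda(\tilde{\textbf{\upshape e}})/\sum_{\textbf{\upshape z}}\psi_\mu^\lambda(\tilde{\textbf{\upshape e}}+\textbf{\upshape A}\textbf{\upshape z})$, then drop to the $\textbf{\upshape G}$-block via Lemma~\ref{ginsteadofa}, bound the Bessel ratios by Lemma~\ref{modbesrat} and Lemma~\ref{techlem}, apply the tail bounds of Lemma~\ref{skellbound} and Lemma~\ref{smallnormvec}, and finish by counting a $2^{\Omega(\kappa)}$-size subset of $\{0,1\}^{\kappa/2}$. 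The paper performs the same moves in only a slightly different order, restricting at the end to the set $\mathcal{Z}$ of vectors of Hamming weight exactly $\kappa/4$ rather than all of $\{0,1\}^{\kappa/2}$.

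Two small but worthwhile differences. First, your per-coordinate bound is derived by actually summing the telescoping series, giving $-|d|(2|k|+|d|+1)/(2\mu)$, whereas the paper replaces all $(\textbf{\upshape A}\textbf{\upshape z})_j$ factors by the worst (last) one and gets the looser $-(\textbf{\upshape A}\textbf{\upshape z})_j((\textbf{\upshape A}\textbf{\upshape z})_j+\tilde e_j)/\mu$; similarly you route the cross term through Cauchy--Schwarz and a bound on $\|\tilde{\textbf{\upshape e}}\|_2$, while the paper simply replaces each $\tilde e_j$ by its coordinatewise cap $s\sqrt{\mu}$. Both give $\log$-ratio exponents of the same order, so the choice is cosmetic. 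Second, and more substantively, you correctly flag that a naive union bound over the exponentially many $\textbf{\upshape z}$ does not go through with only a $\text{\upshape\sffamily neg}(\kappa)$ per-$\textbf{\upshape z}$ failure probability, and you propose a Markov averaging argument to conclude that a $1-o(1)$ fraction of $\textbf{\upshape z}$ satisfy the norm bounds simultaneously. The paper simply asserts that the bound on the whole sum ``follows from Lemma~\ref{smallnormvec}'' without spelling out this averaging step; your observation fills that gap cleanly, at the cost of only a constant factor in the count of good $\textbf{\upshape z}$. One bookkeeping caveat you inherit from the paper: the step where $s^{3/2}\sqrt{\kappa}$ is absorbed into the $2^{\Omega(\kappa)}$ budget really needs $s^3/\kappa \to 0$ rather than merely $\kappa > s^3$, and your comparison ``strictly below $\kappa/2-\Delta$'' mixes natural and base-2 logarithms; neither issue is a flaw in your strategy, but both deserve a clean final accounting.
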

\begin{proof}
Let $(\textbf{\upshape Mz})_j$ denote the $j$-th entry of $\textbf{\upshape Mz}$ for a matrix $\textbf{\upshape M}$ and a vector $\textbf{\upshape z}$. Let $\textbf{\upshape A}=(\textbf{\upshape A}^\prime||\textbf{\upshape A}^\prime\textbf{\upshape T}+\textbf{\upshape G})$ be distributed according to $\{\mathcal{C}_{\kappa,\lambda,q,\nu}\}$ with $\textbf{\upshape A}^\prime\leftarrow\mathcal{U}(\mathbb{Z}_q^{\lambda\times\kappa/2})$, $\textbf{\upshape T}\leftarrow\mathcal{U}(\mathbb{Z}_q^{\kappa/2\times\kappa/2})$ and $\textbf{\upshape G}\leftarrow D_\nu^{\lambda\times\kappa/2}$. Let $\tilde{\textbf{\upshape e}}=(\tilde{e}_j)_{j=1,\ldots,\lambda}\leftarrow\text{\upshape Sk}_{\mu}^\lambda$. Then we have the following chain of (in)equations:
\begin{align*}
 & \Pr_{(\textbf{\upshape A},\tilde{\textbf{\upshape x}},\tilde{\textbf{\upshape e}})}[H_{f_{\textbf{\upshape A},\textbf{\upshape e}},f_{\textbf{\upshape A},\tilde{\textbf{\upshape e}}},\tilde{\textbf{\upshape x}}}(\textbf{\upshape x})\geq\Delta]\\
= & \Pr_{(\textbf{\upshape A},\tilde{\textbf{\upshape x}},\tilde{\textbf{\upshape e}})}\left[\Pr[\textbf{x}=\tilde{\textbf{x}}\,|\,\textbf{Ax} + \textbf{e}=\textbf{A}\tilde{\textbf{x}} + \tilde{\textbf{e}}]\leq 2^{-\Delta}\right]\\
= & \Pr_{(\textbf{\upshape A},\tilde{\textbf{\upshape x}},\tilde{\textbf{\upshape e}})}\left[\Pr_{\textbf{\upshape e}}[\textbf{Ax} + \textbf{e}=\textbf{A}\tilde{\textbf{x}} + \tilde{\textbf{e}}\,|\,\textbf{\upshape x}=\tilde{\textbf{x}}]\cdot\frac{\Pr[\textbf{x}=\tilde{\textbf{x}}]}{\Pr_{(\textbf{x},\textbf{e})}[\textbf{\upshape A}\textbf{\upshape x}+\textbf{\upshape e}=\textbf{\upshape A}\tilde{\textbf{x}}+\tilde{\textbf{e}}]}\leq 2^{-\Delta}\right]\numberthis\label{bayeseq}\\
= & \Pr_{(\textbf{\upshape A},\tilde{\textbf{\upshape x}},\tilde{\textbf{\upshape e}})}\left[\Pr_{\textbf{\upshape e}}[\textbf{Ax} + \textbf{e}=\textbf{A}\tilde{\textbf{x}} + \tilde{\textbf{e}}\,|\,\textbf{\upshape x}=\tilde{\textbf{x}}]\cdot\right.\\ 
& \mbox{\,\,\,\,\,\,\,\,\,\,\,\,\,\,\,\,\,\,\,\,\,\,\,\,\,\,\,\,}\left.\cdot\frac{\Pr[\textbf{\upshape x}=\tilde{\textbf{x}}]}{\sum_{\textbf{\upshape z}\in\mathbb{Z}_q^\kappa}\Pr_{\textbf{\upshape e}}[\textbf{\upshape A}\textbf{\upshape x}+\textbf{\upshape e}=\textbf{\upshape A}\tilde{\textbf{\upshape x}}+\tilde{\textbf{\upshape e}}\,|\,\textbf{\upshape x}=\textbf{\upshape z}]\cdot\Pr[\textbf{\upshape x}=\textbf{\upshape z}]}\leq 2^{-\Delta}\right]\\
= & \Pr_{(\textbf{\upshape A},\tilde{\textbf{\upshape x}},\tilde{\textbf{\upshape e}})}\left[\Pr_{\textbf{\upshape e}}[\textbf{\upshape e}=\tilde{\textbf{\upshape e}}]\cdot\frac{\Pr[\textbf{\upshape x}=\tilde{\textbf{\upshape x}}]}{\sum_{\textbf{\upshape z}\in\mathbb{Z}_q^\kappa}\Pr_{\textbf{\upshape e}}[\textbf{\upshape A}(\textbf{\upshape z}-\tilde{\textbf{\upshape x}})+\textbf{\upshape e}=\tilde{\textbf{\upshape e}}]\cdot\Pr[\textbf{\upshape x}=\textbf{\upshape z}]}\leq 2^{-\Delta}\right]\\
= & \Pr_{(\textbf{\upshape A},\tilde{\textbf{\upshape x}},\tilde{\textbf{\upshape e}})}\left[\Pr_{\textbf{\upshape e}}[\textbf{\upshape e}=\tilde{\textbf{\upshape e}}]\cdot\frac{1}{\sum_{\textbf{\upshape z}\in\mathbb{Z}_q^\kappa}\Pr_{\textbf{\upshape e}}[\textbf{\upshape A}(\textbf{\upshape z}-\tilde{\textbf{\upshape x}})+\textbf{\upshape e}=\tilde{\textbf{\upshape e}}]}\leq 2^{-\Delta}\right]\numberthis\label{applyuniform}\\
= & \Pr_{(\textbf{\upshape A},\tilde{\textbf{\upshape e}})}\left[\frac{\Pr_{\textbf{\upshape e}}[\textbf{\upshape e}=\tilde{\textbf{\upshape e}}]}{\sum_{\textbf{\upshape z}\in\mathbb{Z}_q^\kappa}\Pr_{\textbf{\upshape e}}[\textbf{\upshape e}=\textbf{\upshape A}\textbf{\upshape z}+\tilde{\textbf{\upshape e}}]}\leq 2^{-\Delta}\right]\numberthis\label{eliminatevariable}\\
= & \Pr_{(\textbf{\upshape A},\tilde{\textbf{\upshape e}})}\left[\sum_{\textbf{\upshape z}\in\mathbb{Z}_q^\kappa}\frac{\Pr_{\textbf{\upshape e}}[\textbf{\upshape e}=\textbf{\upshape A}\textbf{\upshape z}+\tilde{\textbf{\upshape e}}]}{\Pr_{\textbf{\upshape e}}[\textbf{\upshape e}=\tilde{\textbf{\upshape e}}]}\geq 2^{\Delta}\right]\\
= & \Pr_{(\textbf{\upshape A},\tilde{\textbf{\upshape e}})}\left[\sum_{\textbf{\upshape z}\in\mathbb{Z}_q^\kappa}\frac{\prod_{j=1}^\lambda \exp(-\mu)\cdot I_{(\textbf{\upshape A}\textbf{\upshape z})_j+\tilde{e}_j}(\mu)}{\prod_{j=1}^\lambda \exp(-\mu)\cdot I_{\tilde{e}_j}(\mu)}\geq 2^{\Delta}\right]\\
= & \Pr_{(\textbf{\upshape A},\tilde{\textbf{\upshape e}})}\left[\sum_{\textbf{\upshape z}\in\mathbb{Z}_q^\kappa}\prod_{j=1}^\lambda\frac{I_{(\textbf{\upshape A}\textbf{\upshape z})_j+\tilde{e}_j}(\mu)}{I_{\tilde{e}_j}(\mu)}\geq 2^{\Delta}\right]\\
\geq & \Pr_{(\textbf{\upshape A},\tilde{\textbf{\upshape e}})}\left[\sum_{\textbf{\upshape z}\in\mathbb{Z}_q^\kappa}\prod_{j=1}^\lambda\prod_{k=1+\tilde{e}_j}^{(\textbf{\upshape A}\textbf{\upshape z})_j+\tilde{e}_j}\frac{-k+\sqrt{k^2+\mu^2}}{\mu}\geq 2^{\Delta}\right]\numberthis\label{approxmodbes}\\
= & \Pr_{(\textbf{\upshape A},\tilde{\textbf{\upshape e}})}\left[\sum_{\textbf{\upshape z}\in\mathbb{Z}_q^\kappa}\prod_{j=1}^\lambda\prod_{k=1}^{(\textbf{\upshape A}\textbf{\upshape z})_j}\left(\frac{-(k+\tilde{e}_j)}{\mu}+\sqrt{\left(\frac{k+\tilde{e}_j}{\mu}\right)^2+1}\right)\geq 2^{\Delta}\right]\\
\geq & \Pr_{(\textbf{\upshape A},\tilde{\textbf{\upshape e}})}\left[\sum_{\textbf{\upshape z}\in\mathbb{Z}_q^\kappa}\prod_{j=1}^\lambda\left(\frac{-((\textbf{\upshape A}\textbf{\upshape z})_j+\tilde{e}_j)}{\mu}+\sqrt{\left(\frac{(\textbf{\upshape A}\textbf{\upshape z})_j+\tilde{e}_j}{\mu}\right)^2+1}\right)^{(\textbf{\upshape A}\textbf{\upshape z})_j}\geq 2^{\Delta}\right]\numberthis\label{mondeceq}\\
\geq & \Pr_{(\textbf{\upshape A},\tilde{\textbf{\upshape e}})}\left[\sum_{\textbf{\upshape z}\in\mathbb{Z}_q^\kappa}\prod_{j=1}^\lambda\exp\left(-\frac{(\textbf{\upshape A}\textbf{\upshape z})_j+\tilde{e}_j}{\mu}\right)^{(\textbf{\upshape A}\textbf{\upshape z})_j}\geq 2^{\Delta}\right]\numberthis\label{techlemeq}\\
= & \Pr_{(\textbf{\upshape A},\tilde{\textbf{\upshape e}})}\left[\sum_{\textbf{\upshape z}\in\mathbb{Z}_q^\kappa}\prod_{j=1}^\lambda\exp\left(-\frac{(\textbf{\upshape A}\textbf{\upshape z})_j^2+(\textbf{\upshape A}\textbf{\upshape z})_j\cdot\tilde{e}_j}{\mu}\right)\geq 2^{\Delta}\right]\\
\geq & \Pr_{\textbf{\upshape A}}\left[\sum_{\textbf{\upshape z}\in\mathbb{Z}_q^\kappa}\prod_{j=1}^\lambda\exp\left(-\frac{(\textbf{\upshape A}\textbf{\upshape z})_j^2+(\textbf{\upshape A}\textbf{\upshape z})_j\cdot s\sqrt{\mu}}{\mu}\right)\geq 2^{\Delta}\right]-\text{\upshape\sffamily neg}(\kappa)\numberthis\label{skellamboundeq}\\ 
= & \Pr_{\textbf{\upshape A}}\left[\sum_{\textbf{\upshape z}\in\mathbb{Z}_q^\kappa}\exp\left(-\frac{\sum_{j=1}^\lambda(\textbf{\upshape A}\textbf{\upshape z})_j^2+s\sqrt{\mu}\cdot\sum_{j=1}^\lambda(\textbf{\upshape A}\textbf{\upshape z})_j}{\mu}\right)\geq 2^{\Delta}\right]-\text{\upshape\sffamily neg}(\kappa)\\
= & \Pr_{\textbf{\upshape A}}\left[\sum_{\textbf{\upshape z}\in\mathbb{Z}_q^\kappa}\exp\left(-\frac{||\textbf{\upshape A}\textbf{\upshape z}||_2^2+s\sqrt{\mu}\cdot||\textbf{\upshape A}\textbf{\upshape z}||_1}{\mu}\right)\geq 2^{\Delta}\right]-\text{\upshape\sffamily neg}(\kappa)\\
\geq & \Pr_{\textbf{\upshape G}}\left[\sum_{\textbf{\upshape z}\in\mathbb{Z}_q^{\kappa/2}}\exp\left(-\frac{||\textbf{\upshape Gz}||_2^2+s\sqrt{\mu}\cdot||\textbf{\upshape Gz}||_1}{\mu}\right)\geq 2^{\Delta}\right]-\text{\upshape\sffamily neg}(\kappa).\numberthis\label{fromAtoG}
\end{align*}

Equation \ref{bayeseq} is an application of the Bayes rule and Equation \ref{applyuniform} applies, since $\textbf{\upshape x}$ is sampled according to a uniform distribution. Equation \ref{eliminatevariable} is valid since in the denominator we are summing over all possible $\textbf{\upshape z}\in\mathbb{Z}_q^\kappa$. Inequation \ref{approxmodbes} is an iterative application of Theorem \ref{modbesrat}. Note that the modified Bessel function of the first kind ist symmetric when considered over integer orders. Therefore, from this point of the chain of (in)equations (i.e. from Inequation \ref{approxmodbes}), we can assume that $\tilde{e}_j\geq 0$. Moreover, we can assume that $(\textbf{\upshape A}\textbf{\upshape z})_j\geq 0$, since otherwise $I_{(\textbf{\upshape A}\textbf{\upshape z})_j+\tilde{e}_j}(\mu)>I_{-(\textbf{\upshape A}\textbf{\upshape z})_j+\tilde{e}_j}(\mu)$. I.e. if $(\textbf{\upshape A}\textbf{\upshape z})_j< 0$, then we implicitly change the sign of the $j$th row in the original matrix $\textbf{\upshape A}$ while considering the particular $\textbf{\upshape z}$. In this way we are always considering the worst-case scenario for every $\textbf{\upshape z}$. Note that this step does not change the distribution of $\textbf{\upshape A}$, since $\{\mathcal{C}_{\kappa,\lambda,q,\nu}\}$ is symmetric. Inequation \ref{mondeceq} holds, since $f_\mu(k)=(-k+\sqrt{k^2+\mu^2})/\mu$ is a monotonically decreasing function. Inequation \ref{techlemeq} follows from Lemma \ref{techlem} by setting $C=((\textbf{\upshape A}\textbf{\upshape z})_j+\tilde{e}_j)/\mu$.  
Inequation \ref{skellamboundeq} holds because of the bound in Lemma \ref{skellbound}. Inequation \ref{fromAtoG} follows from Lemma \ref{ginsteadofa}, since $\textbf{\upshape A}=(\textbf{\upshape A}^\prime||\textbf{\upshape A}^\prime\textbf{\upshape T}+\textbf{\upshape G})$.\\
Now consider the set $\mathcal{Z}\subset\{0,1\}^{\kappa/2}$ with each element in $\mathcal{Z}$ having hamming weight exactly $\kappa/4$. Then $|\mathcal{Z}|=\binom{\kappa/2}{\kappa/4}>2^{\kappa/4}$. Since $\mu=\lambda^2\nu$, from Lemma \ref{smallnormvec} it follows that
\[\Pr_{\textbf{\upshape G}}\left[\sum_{\textbf{\upshape z}\in\mathcal{Z}}\exp\left(-\frac{||\textbf{\upshape Gz}||_2^2+s\sqrt{\mu}\cdot||\textbf{\upshape Gz}||_1}{\mu}\right)\geq 2^{\kappa/4}\cdot\exp\left(-\frac{\kappa s}{4\lambda}-\frac{s\sqrt{\kappa s}}{2}\right)\right]\geq 1-\text{\upshape\sffamily neg}(\kappa),\]
where the norm is computed in the central residue-class representation of the elements in $\mathbb{Z}_q$. Moreover we have
\[2^{\kappa/4}\cdot\exp\left(-\frac{\kappa s}{4\lambda}-\frac{s\sqrt{\kappa s}}{2}\right)>C^\kappa\]
for some constant $C>1$, since $\kappa> s^3$ and $\lambda> s$. Therefore
\begin{align*}
 & \Pr_{(\textbf{\upshape A},\tilde{\textbf{\upshape x}},\tilde{\textbf{\upshape e}})}[H_{f_{\textbf{\upshape A},\textbf{\upshape e}},f_{\textbf{\upshape A},\tilde{\textbf{\upshape e}}},\tilde{\textbf{\upshape x}}}(\textbf{\upshape x})\geq\Delta]\\ 
 \geq & \Pr_{\textbf{\upshape G}}\left[\sum_{\textbf{\upshape z}\in\mathbb{Z}_q^{\kappa/2}}\exp\left(-\frac{||\textbf{\upshape Gz}||_2^2+s\sqrt{\mu}\cdot||\textbf{\upshape Gz}||_1}{\mu}\right)\geq 2^\Delta\right]-\text{\upshape\sffamily neg}(\kappa)\\
\geq & \Pr_{\textbf{\upshape G}}\left[\sum_{\textbf{\upshape z}\in\mathcal{Z}}\exp\left(-\frac{||\textbf{\upshape Gz}||_2^2+s\sqrt{\mu}\cdot||\textbf{\upshape Gz}||_1}{\mu}\right)\geq 2^\Delta\right]-\text{\upshape\sffamily neg}(\kappa)\\
\geq & 1-\text{\upshape\sffamily neg}(\kappa).
\end{align*}
\end{proof}

\noindent We put the previous results together in order to show the main theorem.\\

\noindent\textit{Proof of Theorem} \ref{lweskthm}. By Theorem \ref{wtacr} the $\text{\upshape LWE}(\kappa,\lambda,q,D_\nu)$ problem is hard for $\nu=(\alpha q)^2/(2\pi)>2\kappa/\pi$ if there exists no efficient quantum algorithm approximating the decision-version of the shortest vector problem ($\text{\upshape GAPSVP}$) and the shortest independent vectors problem ($\text{\upshape SIVP}$) to within $\tilde{O}(\kappa/\alpha)$ in the worst case. Let $q=q(\kappa)=\text{\upshape\sffamily poly}(\kappa)$, $s=s(\kappa)=\omega(\log(\kappa))$ and $\lambda>3\kappa$. Then for $\Delta=\omega(\log(\kappa))$, Lemma \ref{nonloss} (setting $r=s\sqrt{\mu}$), Lemma \ref{lossypr} and Lemma \ref{lossysklem} provide that Construction \ref{lossysk} gives us a family of $\Delta$-lossy codes for the symmetric Skellam distribution with variance $\mu\geq\lambda^2\nu$. By Theorem \ref{lossythm} this is sufficient for the hardness of the $\text{\upshape LWE}(\kappa,\lambda,q,\text{\upshape Sk}_\mu)$ problem. Setting $\rho=\alpha\lambda$ yields $(\rho q)^2>4\lambda^2\kappa$ and the claim follows. 
\hfill$\Box$\\

\noindent By Theorem \ref{lwestod} we get the hardness of the DLWE problem as a corollary. 

\begin{Cor}\label{dlweskcor}
Let $\kappa$ be a security parameter 
and let $\lambda=\lambda(\kappa)=\text{\upshape\sffamily poly}(\kappa)$ with $\lambda>3\kappa$. Let $q=q(\kappa)=\text{\upshape\sffamily poly}(\kappa)$ be a sufficiently large prime modulus and $\rho>0$ such that $\rho q\geq 2\lambda\sqrt{\kappa}$. If there exists a PPT-algorithm that solves the $\text{\upshape DLWE}(\kappa,\lambda,q,{\text{\upshape Sk}_{(\rho q)^2/4}})$ problem with non-negligible probability, 
then there exists an efficient quantum-algorithm that approximates the decision-version of the shortest vector problem ($\text{\upshape GAPSVP}$) and the shortest independent vectors problem ($\text{\upshape SIVP}$) to within $\tilde{O}(\lambda\kappa/\rho)$ in the worst case.
\end{Cor}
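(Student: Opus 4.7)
The plan is to obtain the corollary as a straightforward composition of two results already established: the sample-preserving search-to-decision reduction of Theorem \ref{lwestod}, and the main hardness result of Theorem \ref{lweskthm}. Concretely, I would start by assuming a PPT distinguisher $\mathcal{D}$ that solves $\text{\upshape DLWE}(\kappa,\lambda,q,\text{\upshape Sk}_{(\rho q)^2/4})$ with non-negligible advantage, and then feed $\mathcal{D}$ into the two reductions in sequence. The first step invokes Theorem \ref{lwestod}: since $q$ is a polynomially bounded prime and $\text{\upshape Sk}_{(\rho q)^2/4}$ (via the central residue-class representation) may be viewed as a distribution on $\mathbb{Z}_q$, the hypothesis of Theorem \ref{lwestod} is satisfied with $\chi = \text{\upshape Sk}_{(\rho q)^2/4}$. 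This produces a PPT search adversary $\mathcal{T}$ for $\text{\upshape LWE}(\kappa,\lambda,q,\text{\upshape Sk}_{(\rho q)^2/4})$ with non-negligible success probability, using the same number of samples $\lambda$.

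The second step invokes Theorem \ref{lweskthm} on $\mathcal{T}$. The parameter conditions imposed by the corollary, namely $\lambda = \text{\upshape\sffamily poly}(\kappa)$ with $\lambda > 3\kappa$, $q$ a sufficiently large polynomially bounded prime, and $\rho q \geq 2\lambda\sqrt{\kappa}$, are \emph{identical} to those of Theorem \ref{lweskthm}, so no extra parameter bookkeeping is required. Applying the theorem yields an efficient quantum algorithm that approximates $\text{\upshape GAPSVP}$ and $\text{\upshape SIVP}$ to within $\tilde{O}(\lambda\kappa/\rho)$ in the worst case, exactly as claimed.

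There is essentially no obstacle here: the heavy lifting — constructing the lossy code for the Skellam distribution and showing its lossiness via the modified Bessel function bounds — has already been done inside the proof of Theorem \ref{lweskthm}, and Theorem \ref{lwestod} is stated for an \emph{arbitrary} distribution $\chi$ on $\mathbb{Z}_q$, so no additional distributional hypotheses on $\text{\upshape Sk}_{(\rho q)^2/4}$ need to be verified. The only point that deserves a brief check is that the sample-preserving property of Theorem \ref{lwestod} guarantees the number of samples used by $\mathcal{T}$ is still $\lambda$, so that the $\lambda$-bounded regime required by Theorem \ref{lweskthm} is respected; this is immediate from the statement of Theorem \ref{lwestod}. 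The corollary therefore follows by chaining the two reductions.
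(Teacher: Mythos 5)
Your proposal is correct and takes exactly the same route as the paper, which states Corollary~\ref{dlweskcor} immediately after the remark ``By Theorem~\ref{lwestod} we get the hardness of the DLWE problem as a corollary'' --- i.e.\ the composition of the sample-preserving search-to-decision reduction with Theorem~\ref{lweskthm}. Your additional checks (that $q$ is a polynomially bounded prime so Theorem~\ref{lwestod} applies, that the parameter constraints carry over verbatim, and that the sample count remains $\lambda$ so the $\lambda$-bounded regime is respected) are the right ones to record even though the paper leaves them implicit.
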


\section{Application to Differential Privacy}\label{applic}

We turn to showing how the previous result contributes to building prospective post-quantum secure protocols for differential privacy with a relatively high accuracy. In contrast to the $\text{\upshape LWE}(\kappa,\lambda,q,D_\nu)$ problem, note that for the hardness of the $\text{\upshape LWE}(\kappa,\lambda,q,\text{\upshape Sk}_\mu)$ problem we need the standard deviation $\sqrt{\mu}$ of the symmetric Skellam distribution to grow linearly in the number $\lambda$ of equations.

\subsection{Security}\label{secsubsec}

As mentioned in the introduction, the notion of Private Stream Aggregation (PSA) was introduced in \cite{2} and in \cite{40} it was shown that a PSA scheme can be built upon any key-homomorphic weak pseudo-random function and some security guarantees were provided. In the next theorem we recap the result from \cite{40} in a brief form.

\begin{Thm}[Weak PRF gives secure protocol \cite{40}]\label{PSATHEOREM}
Let $\kappa$ be a security parameter, and $m,n\in\mathbb{N}$ with $\log(m)=\text{poly}(\kappa),n=\text{poly}(\kappa)$. Let $(G,\cdot), (S,*)$ be finite groups and $G^\prime\subseteq G$. For some finite set $M$, let \[\mathcal{F}=\{\text{\upshape\sffamily F}_s\,|\,\text{\upshape\sffamily F}_s:M\to G^\prime\}_{s\in S}\] be a (possibly randomised) \textit{weak PRF family} and let \[\varphi:\{-mn,\ldots,mn\}\to G\] be a mapping. Let the algorithm \textbf{\mbox{\upshape \sffamily Setup}} be defined as follows.
\begin{description}
\item \textbf{\mbox{\upshape\sffamily Setup}}: $(\mbox{\upshape\sffamily pp},T,s_0,s_1,\ldots,s_n)\leftarrow \mbox{\upshape\sffamily Setup}(1^\kappa)$, where $\mbox{\upshape\sffamily pp}$ are parameters of $G,G^\prime,S,M,\mathcal{F},\varphi$. The keys are $s_i\leftarrow\mathcal{U}(S)$ for all $i\in[n]$ with $s_0=(\bigast_{i=1}^n s_i)^{-1}$ and $T\subset M$ such that all $t\in T$ are chosen uniformly at random from $M$.
\end{description}
Then for any ppt algorithm  in the non-adaptive compromise model 
the following algorithm generates ciphers indistinguishable under a chosen plaintext attack:
\begin{description}
\item\textbf{\mbox{\upshape\sffamily PSAEnc}}: $c^{(t)}_i\leftarrow \text{\upshape\sffamily F}_{s_i}(t)\cdot \varphi(x^{(t)}_i)$ for $x^{(t)}_i\in\{-m,\ldots,m\}, t\in T$.\,\,\,\,\,\,\,\,\,\,\,\,\,\,\,\,\,\,\,\,\,\,\,\,\,\,\,\,\,\,\,\,\,\,\,\,\,\,\,\,\,\,\,\,\,\,\,\,\,\,\,\,\,\,\,\,\,\,\,\,\,\,\,\,\,\,\,\,\,\,\,\,\,\,\,\,\,\,
\end{description}
Moreover, if $\mathcal{F}$ contains only deterministic functions that are \textit{homomorphic over $S$} 
and if $\varphi$ is an $mn$-isomorphic embedding, then the following algorithm correctly decrypts $\sum_{i=1}^n x^{(t)}_i$ for all $t$:
\begin{description}
\item\textbf{\mbox{\upshape\sffamily PSADec}}: compute $\varphi\left(\sum_{i=1}^n x^{(t)}_i\right)=\text{\upshape\sffamily F}_{s_0}(t)\cdot c^{(t)}_1\cdot\ldots\cdot c^{(t)}_n$ and invert.\,\,\,\,\,\,\,\,\,\,\,\,\,\,\,\,\,\,\,\,\,\,\,\,\,\,\,\,\,\,\,\,\,\,\,\,\,\,\,\,\,\,\,\,\,\,\,\,\,\,\,\,\,\,\,\,\,\,\,\,\,\,\,\,\,\,\,\,\,\,\,\,\,\,\,\,\,\,\,\,\,\,\,\,\,\,\,\,\,\,\,
\end{description}
\end{Thm}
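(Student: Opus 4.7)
The statement splits cleanly into a security claim (CPA-indistinguishability of the ciphertexts) and a correctness claim for $\text{\upshape\sffamily PSADec}$. I would treat them separately, as they draw on disjoint hypotheses on $\mathcal{F}$.

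For the security part, the approach is a hybrid argument over the set of uncompromised users, reducing to the weak PRF property of $\mathcal{F}$. Fix a PPT adversary in the non-adaptive compromise model: it first commits to a coalition $C\subseteq[n]$ of corrupted users, then receives the keys $\{s_i\}_{i\in C}$ together with the ciphertexts $c_i^{(t)}=\text{\upshape\sffamily F}_{s_i}(t)\cdot\varphi(x_i^{(t)})$ for $i\in[n]$, $t\in T$. Crucially, $\text{\upshape\sffamily Setup}$ samples each $t\in T$ uniformly from $M$, and this same $T$ is reused across users, so for every honest $i\notin C\cup\{0\}$ the pairs $\{(t,\text{\upshape\sffamily F}_{s_i}(t))\}_{t\in T}$ constitute a genuine weak-PRF challenge on $s_i$. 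I would walk through the uncompromised indices one at a time, in each hybrid step replacing $\text{\upshape\sffamily F}_{s_i}(\cdot)$ by a uniformly random function $R_i:T\to G'$; the weak-PRF assumption bounds the distinguishing advantage of each step by a negligible quantity, and there are only polynomially many steps. In the terminal hybrid, at least one honest ciphertext per time slot is masked by an independent uniform element of $G'$, which statistically hides $\varphi(x_i^{(t)})$ via a one-time-pad argument inside $G'$. Summing the hybrid gaps gives CPA-indistinguishability.

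For correctness, I would expand the decryption formula using the key-homomorphism of $\text{\upshape\sffamily F}$ in its index. From $s_0=(\bigast_{i=1}^n s_i)^{-1}$ and homomorphicity,
\[\text{\upshape\sffamily F}_{s_0}(t)\cdot\prod_{i=1}^n\text{\upshape\sffamily F}_{s_i}(t)=\text{\upshape\sffamily F}_{s_0*s_1*\cdots*s_n}(t)=\text{\upshape\sffamily F}_{e_S}(t)=e_G,\]
so inside the product $\text{\upshape\sffamily F}_{s_0}(t)\cdot c_1^{(t)}\cdots c_n^{(t)}$ the PRF factors collapse and what remains is $\prod_{i=1}^n\varphi(x_i^{(t)})$. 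Since $\varphi$ is an $mn$-isomorphic embedding of $\{-mn,\ldots,mn\}$ into $G$ and each $|x_i^{(t)}|\leq m$ guarantees $\bigl|\sum_i x_i^{(t)}\bigr|\leq mn$, this product equals $\varphi\bigl(\sum_{i=1}^n x_i^{(t)}\bigr)$, which is invertible on the embedded range.

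The main obstacle is the security reduction, specifically ensuring that the weak PRF interface suffices. One has to justify that the inputs on which $\text{\upshape\sffamily F}_{s_i}$ is evaluated are, from the reduction's point of view, uniform and independent of $s_i$ even though the adversary sees the full set $T$ implicitly through the ciphertexts and even though the same $T$ is shared across users. The sharing is what forbids a direct appeal to standard PRF security on adversarially chosen inputs; it is however exactly compatible with the weak-PRF definition, where the inputs are fresh random samples revealed together with the outputs, and this is the point one must articulate carefully. The collapse of the product in the correctness proof additionally requires commuting the $\text{\upshape\sffamily F}$ factors past the $\varphi$ factors, which is immediate if $G$ is abelian and otherwise needs the ciphertexts to be grouped before multiplication; I would state the commutative-$G$ convention explicitly to avoid that pitfall.
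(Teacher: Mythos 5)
The paper does not prove this theorem at all --- it is explicitly imported from \cite{40} (``we recap the result from \cite{40} in a brief form''), so there is no in-paper proof to compare against. Your sketch must therefore be judged on its own.

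Your correctness argument is essentially right, including the caveat about commutativity of $G$ and the role of the $mn$-isomorphic embedding; this matches the intended proof. The security sketch, however, has a genuine gap around the aggregator key $s_0$. In the PSA security notion (aggregator obliviousness) the adversary may corrupt the aggregator and receive $s_0=(\bigast_{i=1}^n s_i)^{-1}$, and indeed the point of the scheme is that even the aggregator, who can decrypt the sum, learns nothing beyond it. You restrict the coalition to $C\subseteq[n]$, implicitly keeping $s_0$ away from the adversary, and your hybrid argument relies on this: when you embed a weak-PRF challenge into some honest key $s_j$, the reduction does not know $s_j$ and therefore cannot compute $s_0$, which is a deterministic function of $s_j$. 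If $s_0$ must be handed to the adversary (or even just used consistently in the game), the naive one-key-at-a-time hybrid breaks down. The standard resolution is to sample $s_0$ and all but one honest key uniformly and let the last honest key be defined implicitly by the constraint, then simulate that user's PRF evaluations from the others via key-homomorphism --- but that uses exactly the homomorphism hypothesis you reserved for the correctness half, and is not available for a merely \emph{randomised} weak PRF, which is what the theorem's security clause is meant to cover (and what the DLWE instantiation actually is). So either the coalition model must exclude $s_0$, or the reduction needs a more careful bookkeeping of which keys are sampled freely versus determined by the product constraint; you should state which and argue it, rather than silently assuming $C\subseteq[n]$.

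A second, smaller omission: for the one-time-pad step you need the random replacement $R_i(t)$ to be uniform over a set on which left-multiplication acts regularly (e.g.\ $G'$ a subgroup, or at least the masking statistically hides $\varphi(x_i^{(t)})$), and you need the usual PSA constraint that the two challenge plaintext vectors have equal aggregate over the honest users --- otherwise the final hybrid still leaks the sum through $\text{\upshape\sffamily PSADec}$ with $s_0$, and indistinguishability cannot hold. Both points are standard but load-bearing and should be made explicit.
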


We can build an instantiation of Theorem \ref{PSATHEOREM} (without correct decryption) based on the $\text{\upshape DLWE}(\kappa,\lambda,q,\chi)$ problem as follows. Set $S=M=\mathbb{Z}_q^\kappa, G=\mathbb{Z}_q$, choose $s_i\leftarrow \mathcal{U}(S)$ for all $i=1,\ldots,n$ and $s_0=-\sum_{i=1}^n s_i$, set $\text{\upshape\sffamily F}_{s_i}(t)=\langle t,s_i\rangle +e^{(t)}_i$ (which is a so-called \textit{randomised} weak pseudo-random function as described in \cite{49} and \cite{44}), where $e^{(t)}_i\leftarrow\chi$ and let $\varphi$ be the identity function. Here the decryption function is defined by \[\text{\upshape PSADec}_{s_0}(c^{(t)}_1,\ldots,c^{(t)}_n)=\langle t,s_0\rangle+\sum_{i=1}^n c^{(t)}_i=\sum_{i=1}^n x^{(t)}_i+\sum_{i=1}^n e^{(t)}_i.\]
Thus, the decryption is not perfectly correct any more, but randomised and it gives us a noisy sum. 

\begin{Exm}\label{discretegaussianexample}
Let $\chi=D_{\nu/n}$ with variance $\nu/n=2\kappa/\pi$, then the $\text{\upshape DLWE}(\kappa,\lambda,q,\chi)$ problem is hard due to Theorem $\ref{wtacr}$ and Theorem $\ref{lwestod}$. Thus, the above scheme is secure in the sense of Theorem $\ref{PSATHEOREM}$.
\end{Exm}

\begin{Exm}\label{skellamexample}
Let $\chi=\text{\upshape Sk}_{\mu/n}$ with variance $\mu/n=\lambda^2\kappa$ (where $\lambda=\lambda(\kappa)=\text{\upshape\sffamily poly}(\kappa)$ with $\lambda>3\kappa$), then the $\text{\upshape DLWE}(\kappa,\lambda,q,\chi)$ problem is hard due to Corollary $\ref{dlweskcor}$ and the above scheme is secure in the sense of Theorem $\ref{PSATHEOREM}$.
\end{Exm}

\subsection{Privacy}\label{privsubsec}

Moreover, the total noise $\sum_{i=1}^n e^{(t)}_i$ in Example \ref{skellamexample} is distributed according to $\text{\upshape Sk}_\mu$ due to Lemma \ref{sksum}. Thus, in contrast to the total noise in Example \ref{discretegaussianexample}, the total noise in Example \ref{skellamexample} preserves the distribution of the single noise and can be used for preserving differential privacy of the correct sum by splitting the task of perturbation among the users.\\
We provide the definition of $(\epsilon,\delta)$-differential privacy (\mbox{\upshape\sffamily DP}) and a bound on the variance $\mu$ of the symmetric Skellam distribution that is needed in order to preserve $(\epsilon,\delta)$-\mbox{\upshape\sffamily DP}.\\
We recall that a randomised mechanism preserves differential privacy if its application on two adjacent databases, i.e. databases which differ in one entry only, leads to close distributions of the output.

\begin{Def}[Differential Privacy~\cite{8}]
Let $\mathcal{R}$ be a (possibly infinite) set and let $n\in\mathbb{N}$. A randomised mechanism $\mathcal{A}:\mathcal{D}^n\to\mathcal{R}$ preserves $(\epsilon,\delta)$-differential privacy, if for all adjacent databases $D_0, D_1\in\mathcal{D}^n$ and all $R\subseteq\mathcal{R}$:
\[\Pr[\mathcal{A}(D_0)\in R]\leq e^\epsilon\cdot \Pr[\mathcal{A}(D_1)\in R]+\delta.\]
The probability space is defined over the randomness of $\mathcal{A}$.
\end{Def}

Typically $(\epsilon,\delta)$-\mbox{\upshape\sffamily DP} is achieved by properly perturbing the correct statistics. The next theorem shows how to use the Skellam distribution for this task.

\begin{Thm}[Skellam mechanism preserves \mbox{\upshape\sffamily DP} \cite{40}]\label{skmech} Let $\epsilon>0$ and let $0<\delta<1$. For all databases $D\in\mathcal{D}^n$ the randomised mechanism \[\mathcal{A}_{Sk}(D):=f(D)+Y\]
preserves $(\epsilon,\delta)$-\mbox{\upshape\sffamily DP} with respect to any query $f$ of sensitivity $S(f)$, where $Y\leftarrow\text{\upshape Sk}_\mu$ with 
\[\mu\geq\frac{\log(1/\delta)}{1-\cosh(\epsilon/S(f))+(\epsilon/S(f))\cdot\sinh(\epsilon/S(f))}.\]
\end{Thm}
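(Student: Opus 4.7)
The plan is to reduce the $(\epsilon,\delta)$-\mbox{\upshape\sffamily DP} statement to a one-sided tail bound on $\text{\upshape Sk}_\mu$ and then apply Chernoff using the Skellam moment generating function $\E[e^{tY}]=e^{-\mu(1-\cosh(t))}$ already exhibited in the proof of Lemma~\ref{skellbound}.

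First, I would fix adjacent databases $D_0,D_1$, set $\Delta=f(D_0)-f(D_1)\in\mathbb{Z}$ (so $|\Delta|\le S(f)$), and invoke the standard equivalence: $(\epsilon,\delta)$-\mbox{\upshape\sffamily DP} follows once one shows
\[\Pr_{Y\leftarrow \text{\upshape Sk}_\mu}\!\left[\log\frac{\psi_\mu(Y)}{\psi_\mu(Y+\Delta)}>\epsilon\right]\le\delta,\]
since then, for any $R\subseteq\mathcal{R}$, splitting the event $\{f(D_0)+Y\in R\}$ along the ``bad'' set $\{L>\epsilon\}$ yields $\Pr[\mathcal{A}_{Sk}(D_0)\in R]\le e^{\epsilon}\Pr[\mathcal{A}_{Sk}(D_1)\in R]+\delta$. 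Using that $\psi_\mu(k)=e^{-\mu}I_{|k|}(\mu)$ is even in $k$ and that the privacy-loss tail is monotone in $|\Delta|$, I would reduce to the worst case $\Delta=S(f)$.

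Second, I would bound the privacy loss by a simple function of $Y$ via Lemma~\ref{modbesrat}. Telescoping the Bessel ratios,
\[\log\frac{\psi_\mu(Y)}{\psi_\mu(Y+\Delta)}=\sum_{j=1}^{\Delta}\log\frac{I_{Y+j-1}(\mu)}{I_{Y+j}(\mu)}<\sum_{j=1}^{\Delta}\operatorname{arsinh}\!\left(\frac{Y+j}{\mu}\right),\]
using the reciprocal form of Lemma~\ref{modbesrat} together with $\log((k+\sqrt{k^2+\mu^2})/\mu)=\operatorname{arsinh}(k/\mu)$. Because $\operatorname{arsinh}$ is increasing and concave on the positive ray, this yields a threshold $Y^{*}$, essentially equal to $\mu\sinh(\epsilon/S(f))$, such that the privacy loss is at most $\epsilon$ whenever $Y\le Y^{*}$. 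A Chernoff estimate then controls $\Pr[Y>Y^{*}]$: for every $t>0$,
\[\Pr[Y>Y^{*}]\le e^{-tY^{*}}\cdot\E[e^{tY}]=\exp\bigl(-tY^{*}+\mu(\cosh(t)-1)\bigr).\]
Setting $t=u:=\epsilon/S(f)$ and $Y^{*}=\mu\sinh(u)$ makes the exponent exactly $-\mu(u\sinh(u)-\cosh(u)+1)$, and requiring this to be at most $\log\delta$ is precisely the hypothesis of the theorem.

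The main obstacle I expect is the tight matching between the two preceding steps: the naive use of $L(Y)\le\Delta\cdot\operatorname{arsinh}((Y+\Delta)/\mu)$ gives threshold $\mu\sinh(u)-\Delta$, whose Chernoff bound carries a spurious $e^{\epsilon}$ factor that would inflate the numerator to $\epsilon+\log(1/\delta)$. Closing this gap---either by exploiting the concavity of $\operatorname{arsinh}$ via Jensen's inequality to centre the summation at the midpoint $(Y+(\Delta+1)/2)/\mu$, or by optimising the Chernoff parameter $t$ around $u$ rather than fixing it a priori---is the delicate technical step that produces the clean denominator $1-\cosh(\epsilon/S(f))+(\epsilon/S(f))\sinh(\epsilon/S(f))$ appearing in the statement.
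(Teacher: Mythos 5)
The paper does not prove this theorem; it is cited verbatim from~\cite{40}, so there is no proof here to compare against. Judged on its own merits, your framework is the natural one given the toolkit the paper actually exhibits (Lemma~\ref{modbesrat} for Bessel ratios, the Skellam moment generating function from Lemma~\ref{skellbound}, Chernoff/Markov), and your preliminary reductions are sound: the privacy-loss tail criterion correctly implies $(\epsilon,\delta)$-\mbox{\upshape\sffamily DP}, symmetry of $I_k$ in integer $k$ and monotonicity of the loss in $|\Delta|$ justify taking $\Delta=S(f)$ and $Y\ge 0$, and the telescoping $\log\frac{I_Y(\mu)}{I_{Y+\Delta}(\mu)}<\sum_{j=1}^{\Delta}\operatorname{arsinh}\!\bigl((Y+j)/\mu\bigr)$ via the reciprocal of Lemma~\ref{modbesrat} is exactly right.

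However, the gap you flag is genuine and, as far as I can tell, neither of your two suggested repairs closes it. Writing $u=\epsilon/S(f)$, the crude bound $\sum_j\operatorname{arsinh}\!\bigl((Y+j)/\mu\bigr)\le\Delta\operatorname{arsinh}\!\bigl((Y+\Delta)/\mu\bigr)$ gives the threshold $Y^{*}=\mu\sinh(u)-\Delta$, and the Chernoff exponent at $t=u$ is $-\mu(u\sinh u-\cosh u+1)+u\Delta$, hence a spurious $e^{u\Delta}=e^{\epsilon}$ factor, i.e.\ a numerator $\epsilon+\log(1/\delta)$ instead of $\log(1/\delta)$. Your Jensen fix centres the sum at $Y+(\Delta+1)/2$ and raises the threshold to $\mu\sinh(u)-(\Delta+1)/2$, but the spurious factor only drops to $e^{u(\Delta+1)/2}\approx e^{\epsilon/2}$. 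Optimising $t$ does not help either: with $Y^{*}=\mu\sinh(u)-\Delta$ the optimiser is $t^{*}=\operatorname{arsinh}(Y^{*}/\mu)$, and a first-order expansion in $\Delta/\mu$ again leaves $u\Delta=\epsilon$ in the exponent, so the leading correction is unchanged. Thus your argument actually proves the theorem with numerator $\epsilon+\log(1/\delta)$ (or $\epsilon/2+\log(1/\delta)$ after Jensen), which is the right shape but strictly weaker than the stated $\log(1/\delta)$; obtaining the clean constant would require a sharper ingredient than Lemma~\ref{modbesrat}, e.g.\ a two-sided Bessel-ratio estimate of the Amos type or an argument that exploits cancellation between the threshold loss $u\Delta$ and the slack already present in $I_k(\mu)/I_{k-1}(\mu)<1$, neither of which appears in this paper.
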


\begin{Rem}\label{skdprem} The bound on $\mu$ from Theorem \text{\upshape \ref{skmech}} is smaller than $2\cdot (S(f)/\epsilon)^2\cdot\log(1/\delta)$, thus the standard deviation $\mu$ of $Y\leftarrow\text{\upshape Sk}_\mu$ may be assumed to be linear in $S(f)/\epsilon$ (for constant $\delta$).
\end{Rem}

Suppose that adding symmetric Skellam noise with variance $\mu$ preserves $(\epsilon,\delta)$-\mbox{\upshape\sffamily DP}. We define $\mu_{user}=\mu/n$. Since the Skellam distribution is reproducible, the noise addition can be executed in a distributed manner: each (non-compromised) user simply adds (independent) symmetric Skellam noise with variance $\mu_{user}$ to her own value in order to preserve the privacy of the final output.

\subsection{Accuracy}\label{acsubsec}

\begin{Thm}[Accuracy of the Skellam mechanism \cite{40}]\label{erroranalysis} Let $\epsilon>0$ and $0<\delta<1$. Then for all $0<\beta<1$ the mechanism specified in Theorem $\ref{skmech}$ has $(\alpha,\beta)$-accuracy, where
\[\alpha=\frac{S(f)}{\epsilon}\cdot\left(\log\left(\frac{2}{\beta}\right)+\log\left(\frac{1}{\delta}\right)\right).\]
\end{Thm}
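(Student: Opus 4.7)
The plan is to reduce $(\alpha,\beta)$-accuracy to a tail bound on the Skellam noise and then reuse essentially the same Chernoff--Markov argument that appears in the proof of Lemma \ref{skellbound}, only with a carefully tuned Chernoff parameter. Writing $\mathcal{A}_{Sk}(D)-f(D)=Y$ with $Y\leftarrow\text{\upshape Sk}_\mu$, the assertion of $(\alpha,\beta)$-accuracy amounts to $\Pr[|Y|>\alpha]\leq\beta$, and because $\text{\upshape Sk}_\mu$ is symmetric it suffices to show the one-sided tail $\Pr[Y>\alpha]\leq\beta/2$.

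Mimicking the Laplace-transform step from Lemma \ref{skellbound}, for every $t>0$ we have
\[\Pr[Y>\alpha]\leq e^{-t\alpha+\mu(\cosh(t)-1)}.\]
The natural choice is $t=\epsilon/S(f)$, which is precisely the parameter appearing in the privacy threshold of Theorem \ref{skmech}. With this choice, the definition of $\alpha$ in the statement immediately gives $t\alpha=\log(2/\beta)+\log(1/\delta)$. Substituting the threshold value
\[\mu=\frac{\log(1/\delta)}{t\sinh(t)-(\cosh(t)-1)}\]
into the exponent, the target bound $\Pr[Y>\alpha]\leq\beta/2$ reduces after a short manipulation to proving the elementary inequality
\[2(\cosh(t)-1)\leq t\sinh(t),\]
valid for every real $t$. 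This last inequality follows from a term-by-term comparison of the Taylor series $t\sinh(t)=\sum_{k\geq 0}t^{2k+2}/(2k+1)!$ and $2(\cosh(t)-1)=\sum_{k\geq 0}2t^{2k+2}/(2k+2)!$, since $(2k+2)!/(2k+1)!=2k+2\geq 2$ for $k\geq 0$.

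Everything of substance is concentrated in the choice $t=\epsilon/S(f)$: this is what produces the tight cancellation between $\mu(\cosh(t)-1)$ and the $\log(1/\delta)$ contribution to $t\alpha$, leaving exactly the $\log(2/\beta)$ needed to reach $\beta/2$. The remaining analytic step is routine. One small bookkeeping point is that $Y$ is integer-valued while $\alpha$ need not be, but this only strengthens the bound via $\Pr[Y>\alpha]\leq\Pr[Y\geq\lceil\alpha\rceil]$, so the argument yields the claimed $(\alpha,\beta)$-accuracy without further modification.
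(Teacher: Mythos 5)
Your argument is correct and is the natural proof for this statement (the paper itself only cites \cite{40} without reproducing the argument, so there is no in-text proof to compare against). You correctly reduce $(\alpha,\beta)$-accuracy to the two-sided tail bound $\Pr[|Y|>\alpha]\le\beta$, use symmetry to pass to a one-sided bound with target $\beta/2$, apply the same exponential-Markov bound as in Lemma~\ref{skellbound} with the MGF $\E[e^{tY}]=e^{\mu(\cosh t-1)}$, and the choice $t=\epsilon/S(f)$ together with the threshold value of $\mu$ from Theorem~\ref{skmech} reduces everything to the elementary inequality $2(\cosh t-1)\le t\sinh t$, which your term-by-term Taylor comparison establishes. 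One implicit point worth stating: the argument takes $\mu$ to equal the Theorem~\ref{skmech} threshold (the minimal variance providing $(\epsilon,\delta)$-DP); for $\mu$ strictly above the threshold the tail widens and the stated $\alpha$ would not hold, so the accuracy claim is implicitly about this minimal choice -- but that is an imprecision in the theorem statement, not in your proof.
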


This means, the error does not exceed $\alpha$ with probability at least $1-\beta$. Theorem \ref{PSATHEOREM} indicates that the set $T$ contains all the time-frames where a query can be executed. For simplicity we assume that all queries are independent, i.e. the arguments of all queries are independent. As pointed out in section \ref{lwesubsec} we identify the number of queries with the number of equations in the instance of the $\lambda$-bounded LWE problem, thus $|T|=\lambda$. (A result in \cite{47} indicates that for an efficient and accurate mechanism this number cannot be substantially larger than $n^2$, where $n$ is the number of users in the network.) Due to sequential composition (see for instance Theorem $3$ of \cite{48}), in order to preserve $(\epsilon,\delta)$-\mbox{\upshape\sffamily DP} in all $\lambda$ queries together, the executed mechanism must preserve $(\epsilon/\lambda,\delta)$-\mbox{\upshape\sffamily DP} in every single query. Therefore the following holds: suppose $\text{\upshape Sk}_{\mu^\prime}$-noise is sufficient in order to preserve $(\epsilon,\delta)$-\mbox{\upshape\sffamily DP} in a single query. Then, due to Remark \ref{skdprem}, we must use $\text{\upshape Sk}_{\lambda^2\mu^\prime}$-noise in order to preserve $(\epsilon,\delta)$-\mbox{\upshape\sffamily DP} in all $\lambda$ queries.\\
With Theorem \ref{erroranalysis} we obtain $(\alpha,\beta)$-accuracy for every single query executed during $T$ with
\[\alpha=\frac{S(f)}{\epsilon/\lambda}\cdot\left(\log\left(\frac{2}{\beta}\right)+\log\left(\frac{1}{\delta}\right)\right)=O\left(\frac{S(f)\lambda}{\epsilon}\right),\]
which is optimal with respect to sequential composition.

\subsection{Combining Security, Privacy and Accuracy}

Set $S(f)=m$ and $\mu=2\cdot (m\lambda/\epsilon)^2\cdot\log(1/\delta)$. From the discussion from above it follows that if every user adds $\text{\upshape Sk}_{\mu/n}$-noise to her data in every time-step $t\in T$, then this is sufficient in order to preserve $(\epsilon,\delta)$-\mbox{\upshape\sffamily DP} in all $\lambda$ sum-queries that are executed during $T$, where for each time-step $t\in T$ the data of each user comes from $\{-m,\ldots,m\}$.\\ 
Furthermore, if for a security parameter $\kappa$ we have that $\mu/n=\lambda^2\kappa$, then we obtain a secure protocol for analysing sum-queries, where the security is based on prospectively hard lattice problems. As shown in \cite{40}, a combination of these two results provides \textit{computational} $(\epsilon,\delta)$-\mbox{\upshape\sffamily DP} in all $\lambda$ sum-queries. Assume that we want to find values for $\epsilon, \delta$ such that when every user adds $\text{\upshape Sk}_{\mu/n}$-noise to her data with $\mu=2\cdot (m\lambda/\epsilon)^2\cdot\log(1/\delta)$ to preserve $(\epsilon,\delta)$-\mbox{\upshape\sffamily DP} of the final statistics, then the same noise suffices for providing security. Therefore the following must be satisfied:
\[2\cdot (m\lambda/\epsilon)^2\cdot\log(1/\delta)\geq n\lambda^2\kappa.\]
This inequality holds for 
\[\epsilon=\epsilon(\kappa)\leq\sqrt{\frac{2 m^2\cdot\log(1/\delta)}{\kappa\cdot n}},\numberthis\label{epsilonupperbound}\]
indicating that $\epsilon=\epsilon(\kappa)$ depends on $1/\kappa$. Note that this is consistent with the original definition of computational \mbox{\upshape\sffamily DP} in \cite{15}. Thus, in addition to a privacy/accuracy trade-off we also get a security/accuracy trade-off. More specifically, depending on $\kappa$ and $n$ we obtain a tight lower bound on the $(\alpha,\beta)$-accuracy for every single query executed during $T$:
\begin{align*}
\alpha= & \frac{m}{\epsilon/\lambda}\cdot\left(\log\left(\frac{2}{\beta}\right)+\log\left(\frac{1}{\delta}\right)\right)\\
\geq & \lambda\cdot\sqrt{\frac{\kappa\cdot n}{2\cdot\log(1/\delta)}}\cdot\left(\log\left(\frac{2}{\beta}\right)+\log\left(\frac{1}{\delta}\right)\right)\\
= & \Omega(\lambda\sqrt{\kappa\cdot n}).
\end{align*}

\begin{Exm}
There are $n=20,000$ users in a network with variable time-series data falling in an interval of $\{-1000,\ldots,1000\}$, i.e. $m=1000$. Note that the aggregated sum can not exceed $20,000,000$. The users want to preserve computational $(\epsilon,\delta)$-\mbox{\upshape\sffamily DP} with $\epsilon=1$ and $\delta=0.1$ while evaluating all $\lambda$ sum-queries over a time period $T$. They use the DLWE-based secure protocol for communicating with an untrusted aggregator and generate symmetric Skellam noise. For a hard DLWE problem, the security parameter is chosen to be $\kappa=200$. For these parameters, inequality $\ref{epsilonupperbound}$ is satisfied. For the $(\alpha,\beta)$-accuracy per query it holds that \[\alpha=\frac{m}{\epsilon/\lambda}\cdot\left(\log\left(\frac{2}{\beta}\right)+\log\left(\frac{1}{\delta}\right)\right)=1000\lambda\cdot\left(\log\left(\frac{2}{\beta}\right)+\log(10)\right).\]
\end{Exm}


\section{Conclusions}

In this work we provided a worst-to-average-case connection from conjecturally hard lattice problems to the LWE problem with errors following a symmetric Skellam distribution. Our proof relies on the notion of lossy codes from \cite{39}. An implication of this result is the construction of the first prospective post-quantum Private Stream Aggregation scheme for data analyses under differential privacy where the errors are used both for security of the scheme and for the distributed noise generation for preserving differential privacy. An interesting further direction is to reduce the size of the variance that is necessary for the hardness of the LWE problem with errors following a symmetric Skellam distribution, especially to abolish the dependence on the number of LWE-samples. Another problem to face is to show the hardness of the Ring LWE problem (a more efficient version of LWE introduced in \cite{45}) with errors following a symmetric Skellam distribution and to establish a corresponding search-to-decision reduction. Sufficient conditions on the error distribution for the existence of a search-to-decision reduction for the Ring LWE problem were provided in \cite{55}. The Skellam distribution does not seem to satisfy these conditions. Thus, we require a different proof than in \cite{55}.

\addcontentsline{toc}{chapter}{Literatur}
\bibliography{Literatur}

\end{document}